\documentclass[manuscript, screen]{acmart} 


\usepackage{xcolor, colortbl}
\usepackage{hyperref}
\usepackage{graphicx}
\usepackage{float}


\usepackage{enumitem}

\makeatletter
\let\MYcaption\@makecaption
\makeatother
\usepackage[font=footnotesize]{subcaption}
\makeatletter
\let\@makecaption\MYcaption
\makeatother


\usepackage{amsmath}
\usepackage{amsthm}
\allowdisplaybreaks[1] 

\usepackage{amssymb}
\usepackage{xfrac}
\usepackage{mathtools}

\newcommand{\ip}[2]{\left\langle{#1},{#2}\right\rangle}

\usepackage{multirow}
\usepackage{fancyhdr}
\usepackage[ruled, linesnumbered, noend]{algorithm2e}

\newtheorem{theorem}{Theorem}[section]
\newtheorem{corollary}{Corollary}[theorem]
\newtheorem{lemma}[theorem]{Lemma}
\newtheorem{example}{Example}[section]

\usepackage{fancyvrb}
\usepackage{diagbox}
\newcommand{\ourdiagbox}[2]{\diagbox[dir=SW, linecolor=lightgray, innerrightsep=2pt, innerleftsep=2pt]{#1}{#2}}
\usepackage{adjustbox}
\usepackage{array}
\newcolumntype{x}[1]{>{\centering\arraybackslash\hspace{0pt}}p{#1}}
\newcolumntype{R}[2]{%
    >{\adjustbox{angle=#1,lap=\width-(#2)}\bgroup}%
    l%
    <{\egroup}%
}

\graphicspath{ {images/} }





\title{Exploiting Behavioral Side-Channels in Observation Resilient Cognitive Authentication Schemes}

\author{Benjamin Zi Hao Zhao}
\affiliation{%
  \institution{University of New South Wales and Data61, CSIRO}
  \city{Sydney}
  \country{Australia}
}
\email{benjamin.zhao@unsw.edu.au}

\author{Hassan Jameel Asghar}
\affiliation{%
  \institution{Macquarie University}
  \city{Sydney}
  \country{Australia}
}
\email{hassan.asghar@mq.edu.au}

\author{Mohamed Ali Kaafar}
\affiliation{%
  \institution{Macquarie University}
  \city{Sydney}
  \country{Australia}
}
\email{dali.kaafar@mq.edu.au}

\author{Francesca Trevisan}
\affiliation{%
  \institution{University of Surrey}
  \city{Guildford}
  \country{United Kingdom}
}
\email{f.trevisan@surrey.ac.uk}

\author{Haiyue Yuan}
\affiliation{%
  \institution{University of Surrey}
  \city{Guildford}
  \country{United Kingdom}
}
\email{haiyue.yuan@surrey.ac.uk}

\setlength{\textfloatsep}{5pt}
\setlength{\floatsep}{2pt}
\setlength{\intextsep}{2pt}
\setlength{\dbltextfloatsep}{5pt}
\setlength{\dblfloatsep}{5pt}
\setlength{\abovedisplayskip}{2pt}
\setlength{\belowdisplayskip}{2pt}
\setlength{\abovedisplayshortskip}{2pt}
\setlength{\belowdisplayshortskip}{2pt}

\begin{abstract}
Observation Resilient Authentication Schemes (ORAS) are a class of shared secret challenge-response identification schemes where a user mentally computes the response via a cognitive function to authenticate herself such that eavesdroppers cannot readily extract the secret. Security evaluation of ORAS generally involves quantifying information leaked via observed challenge-response pairs. However, little work has evaluated information leaked via human behavior while interacting with these schemes. A common way to achieve observation resilience is by including a modulus operation in the cognitive function. This minimizes the information leaked about the secret due to the many-to-one map from the set of possible secrets to a given response. In this work, we show that user behavior can be used as a side-channel to obtain the secret in such ORAS. Specifically, the user's eye-movement patterns and associated timing information can deduce whether a modulus operation was performed (a fundamental design element), to leak information about the secret. We further show that the secret can still be retrieved if the deduction is erroneous, a more likely case in practice. We treat the vulnerability analytically, and propose a generic attack algorithm that iteratively obtains the secret despite the ``faulty'' modulus information. We demonstrate the attack on five ORAS, and show that the secret can be retrieved with considerably less challenge-response pairs than non-side-channel attacks (e.g., algebraic/statistical attacks). In particular, our attack is applicable on Mod10, a one-time-pad based scheme, for which no non-side-channel attack exists. We field test our attack with a small-scale eye-tracking user study.
\end{abstract}

\begin{CCSXML}
<ccs2012>
<concept>
<concept_id>10002978.10002979.10002983</concept_id>
<concept_desc>Security and privacy~Cryptanalysis and other attacks</concept_desc>
<concept_significance>500</concept_significance>
</concept>
<concept>
<concept_id>10002978.10002991.10002992.10011618</concept_id>
<concept_desc>Security and privacy~Graphical / visual passwords</concept_desc>
<concept_significance>500</concept_significance>
</concept>
<concept>
<concept_id>10010147.10010257.10010258.10010259</concept_id>
<concept_desc>Computing methodologies~Supervised learning</concept_desc>
<concept_significance>100</concept_significance>
</concept>
</ccs2012>
\end{CCSXML}

\ccsdesc[500]{Security and privacy~Cryptanalysis and other attacks}
\ccsdesc[500]{Security and privacy~Graphical / visual passwords}
\ccsdesc[100]{Computing methodologies~Supervised learning}

\keywords{Cognitive Authentication, Observation Resilient Authentication Schemes, Side-Channel Attack, Modulus Operation, Eyetracking.}

\begin{document}
\maketitle

\section{Introduction}
A longstanding issue with the prevailing methods of authenticating users via passwords and PINs is their vulnerability to observation. The user secrets (password or PIN) is entirely compromised after a single observation via, for instance, shoulder-surfing or a hidden camera. A growing number of reported incidents indicate that this is not just a theoretical vulnerability \cite{KrebsonSecurity},
prompting widespread proposals for alternative authentication schemes. These include biometric authentication (fingerprint, iris, etc.) and one-time passwords, either as standalone systems or in a multi-factor configuration alongside passwords. 
Another alternative is observation resilient challenge-response authentication schemes that rely on human cognition. In such schemes, the verifier (service provider) prompts the user to prove possession of a shared secret, through a series of challenges to whom the user has to respond to by (mentally) computing some cognitive function. The cognitive function is designed in a way that an eavesdropping adversary needs to observe multiple challenge-response pairs to retrieve the secret. We call these schemes observation resilient authentication schemes (ORAS).

An example of such schemes is the \textit{Mod10} scheme~\cite{wilfong1999method}. The user has a 4-digit PIN as the secret. The challenge consists of a random 4-digit number (communicated through a covert channel). The user computes the modulo 10 sum of each of the four digits in the secret with the corresponding digits in the challenge, and submits the 4-digit response. 
The use of the modulus operation is a common design element in many ORAS (e.g. ~\cite{hopper2001secure, li-shum, asghar2013does, chauhan2017behaviocog, matsumoto-attempt, kelleyimpact, asghar-acns}), as it makes them resilient to observation by reducing information leakage. For instance, the dot product of a random binary secret vector with a public binary vector, i.e., the challenge, leaks more information about the secret compared to when the dot product is reduced modulo 2 (only revealing its parity).

An interesting subcategory of ORAS which often employs the modulus operation is the so called $k$-out-of-$n$ ORAS. In these schemes, the secret is a mutually agreed upon set of items (between the user and the service) of cardinality $k$, selected from a larger pool of $n$ items. A challenge contains a random subset of the $n$ items, which is displayed on a device carried by the user. The cognitive function requires, at least, the identification of any of the $k$ secret items. The device itself does not store the user secret, and simply serves as an intermediary, relaying messages. Different realizations of these schemes exist based on how the cognitive function is constructed (which should be easy enough for the user to perform mentally). Examples of such schemes include the Hopper and Blum (HB) scheme~\cite{hopper2001secure}, the (modified) FoxTail scheme (FT)~\cite{asghar2013does}, and BehavioCog (BC)~\cite{chauhan2017behaviocog}, among others (see Section \ref{sec:existing_protocols} for scheme descriptions). An advantage of these schemes is that their security can be quantitatively analyzed by studying the mathematical properties of the cognitive function, and the information leaked through challenge-response pairs.

The security analysis accompanying the proposals of almost all ORAS, including $k$-out-of-$n$ variants, only considers ``flat transcripts'' of challenges-response pairs, ignoring entirely the interaction of the user with the ``relay'' device during computation of the cognitive function. Observing human behavior while interacting with the device is likely to reveal more information about the secret, e.g., if the user dwells over a particular spot on the device's screen. These issues have been raised before \cite{wiedenbeck2006design}\cite{leblanc2010guessing};
however there is no quantitative analysis of how such human behavior can be exploited to compromise the secret, barring some work on timing attacks which exploits the variation in time taken by humans when responding to challenges~\cite{vcagalj2015timing}. In this paper, we analyze how information obtained from user behavior while processing challenges in a wide class of ORAS (one that employs a modulus operation) can compromise the user's secret. In particular, we consider an adversary which can not only observe challenge-response transcripts but also user's eye movements with varying accuracy. 
This information could be obtained through pinhole cameras like those found on ATMs \cite{KrebsonSecurity}, and does not require an adversary to have control of the device's camera.
We show how this adversary can launch an attack on these schemes to obtain the user secret after observing far fewer authentication rounds (number of challenge-response pairs) than attacks which only consider challenge-response transcripts. 

In more detail, our main contributions are as follows: 
\begin{itemize}
\item We analyze a wide class of ORAS in which the cognitive function involves a modulus operation. By using a generic $k$-out-of-$n$ ORAS, we show in Section \ref{sec:mod_bias} that certain responses are more likely a result of a modulus or a non-modulus operation.\footnote{e.g., consider the sum of two integers modulo 10. The sum $5 + 6$ requires a modulus operation, whereas $5 + 3$ does not.} Furthermore, we show that knowing whether a modulus operation was performed or not in a given challenge can leak information that can lead to quicker retrieval of the secret.

\item We propose an algorithm to obtain the user's secret using possibly faulty information about whether the \emph{modulus event} has occurred or not in three proposed $k$-out-of-$n$ ORAS from the literature (BehavioCog, FoxTail and HB). By simulating varying degrees of information accuracy about the modulus event, we show that the resulting attack retrieves the user secret in far fewer authentication rounds (challenge-response pairs) than (efficient) non-side-channel attacks, e.g., Gaussian elimination. For instance, even with an imbalanced simulated accuracy of 1.0 in detecting the modulus, and 0.6 in detecting non-modulus events, we can find the user's secret in 280 rounds for BehavioCog, 390 for FoxTail, and 909 for HB. This reduces to 474 rounds for BehavioCog, 666 for Foxtail, and 1555 for HB, when the latter is reduced to 0.35. In comparison, efficient algebraic attacks on these schemes (Gaussian elimination) require 900 rounds for BehavioCog~\cite{chauhan2017behaviocog} and 16,290 rounds for FoxTail~\cite{asghar2013does}, whereas the HB scheme does not have any efficient attack and hence no bound on the number of rounds. These results are shown in Section \ref{sec:algorithms}.

\item
In Section \ref{sec:classifier}, we perform a small-scale eye-tracking user study with 11 users on BehavioCog, as a field test to evaluate how a user's eye-movement behavior during challenges can potentially expose information about the secret by indicating a modulus/no-modulus event. We identify and derive behavioral features from the eye-movement side-channel, e.g. total challenge time, and duration of last fixation. Using these features we train classifiers to predict the modulus and no-modulus events. 
We use leave-one-user-out verification to demonstrate event-specific behavioral information independent of users, and to avoid over-fitting user-specific behaviors.

\item Continued in Section \ref{sec:classifier}, we demonstrate real-world attack feasibility on $k$-out-of-$n$ ORAS by considering adversaries with varying technological capabilities. Four adversarial levels are considered linked to the detail of information available; from the coarsest---only timing information, to the finest---timing information dwelling on a specific item. 
This information is obtainable with access to a camera directed at the user's face \cite{krafka2016eye}, a likely scenario with covert pinhole cameras already found in instances of ATM skimming \cite{KrebsonSecurity}. 
In comparison to the aforementioned efficient algebraic attacks, we can deduce user secrets in 435, 589, and 1,346 rounds in BehavioCog, FoxTail and HB, respectively.\footnote{These numbers of required rounds (reported in Section~\ref{sec:classifier}) are from accuracy levels obtained via the user study. This is in contrast to the rounds required from simulated accuracy levels (reported in Section~\ref{sec:algorithms}) as mentioned before. See Section~\ref{sec:mod_classifier} for the reason behind the discrepancies in the reported number of rounds.}

\item Finally in Section \ref{sec:other_oras}, we demonstrate that our attacks are applicable to other ORAS as well (not just $k$-out-of-$n$ variants) as long as the cognitive function involves a modulus operation. Specifically, we evaluate the attack on PassGrids~\cite{kelleyimpact}, a locations and modulo arithmetic based scheme. 
We also implicate the Mod-10 scheme~\cite{wilfong1999method}, which is necessarily a one-time pad using a covert channel to communicate the pad. Being a one-time pad, the scheme is secure against an unlimited number of authentication rounds observed. However, we show that the secret can be compromised with knowledge of the modulus event (obtainable through user behavioral information). With a 10\% prediction error, it only takes an average of 36.1 rounds, to compromise a 4-digit PIN in Mod-10.
\end{itemize}

Compared to algebraic attacks (which only require passive observation of challenge-response pairs), behavioral side channels do require more effort from an attackers point of view. However, obtaining the resolution of user's eye movement information required in our attacks is not difficult given today's technology, and the attack in practice can be launched without much difficulty (e.g., by placing hidden cameras on frequently visited spots).
Furthermore, we show that even limited side-channel information such as time to respond to challenges is enough to retrieve the user's secret. This information can be obtained even without hidden cameras. Our attacks suggest that the design of ORAS should explicitly consider user behavior while executing the schemes, as a threat and source of information leakage about the secret especially since these schemes are purported to be observation resilient.

The rest of the paper is laid out as follows: Section \ref{sec:background} summarizes what an Observation Resilient Authentication Scheme is and how it functions. In Section \ref{sec:mod_bias}, modulus-related biases are mathematically analyzed. 
The proposal and simulation of algorithms to exploit faulty oracle information is presented in Section \ref{sec:algorithms}. A realization of the attack is performed with eye-movement side-channel information obtained from an eye-tracking user study, to field-test an attacker's capabilities in Section \ref{sec:classifier}. Finally our extension of the bias onto other authentication schemes is in Section \ref{sec:other_oras}.


\section{Background}
\label{sec:background}

\subsection{Observation Resilient Authentication Schemes}
A (human) authentication scheme is a shared secret challenge-response authentication scheme consisting of a setup phase and an authentication phase. In the setup phase a secret $S$ is shared between the prover (user) and a verifier (the authentication service). The authentication phase involves a series of challenges $c$ from the verifier (displayed on the user's device) and responses $r$ from the user, whereby the user mentally computes a public function $f$ of $c$ and $S$, returning the response $r$ to the verifier. We shall call each challenge together with its corresponding response as a challenge-response pair or a challenge-response round, interchangeably. After a specified number of challenge-response rounds, the verifier accepts the user if the responses are correct; otherwise the user is rejected. 

\paragraph*{Threat Model} We consider an eavesdropping adversary who can observe the interactions between the user and the server (during the authentication phase). Most prior work models this as giving the adversary one or more challenge-response pairs from the authentication phase. We extend this by also allowing the adversary to observe the interaction between the user and its device during the authentication phase (Figure~\ref{fig:process_diagram}). The \emph{transcript} of a challenge-response round is defined as this entire interaction: from challenge receipt, user interaction with the device during computation of $f$, to response submission. 

\paragraph*{Observation Resilience} A human authentication scheme is called observation resilient (ORAS) if no adversary (probabilistic polynomial time algorithm) can extract the secret with probability 1, after observing one or more challenge-response pairs. Note that this definition merely states what qualifies for an ORAS and does not reflect on the security of the ORAS. Indeed, an ORAS might only be secure for a few observations, before the secret can be extracted. For an ORAS to be secure, the probability of finding the secret should be small (or negligible) for a large number of challenge-response pairs. Since each challenge-response pair leaks some information about the secret, the goal of the designer is to use the ORAS for as many challenge-response rounds as possible before the adversary can extract the secret with non-negligible probability. Note that password-based authentication is not observation resilient under this definition, as the secret is recovered after one observation.

\begin{figure}[t!]
	\centering
    \includegraphics[width=0.60\columnwidth]{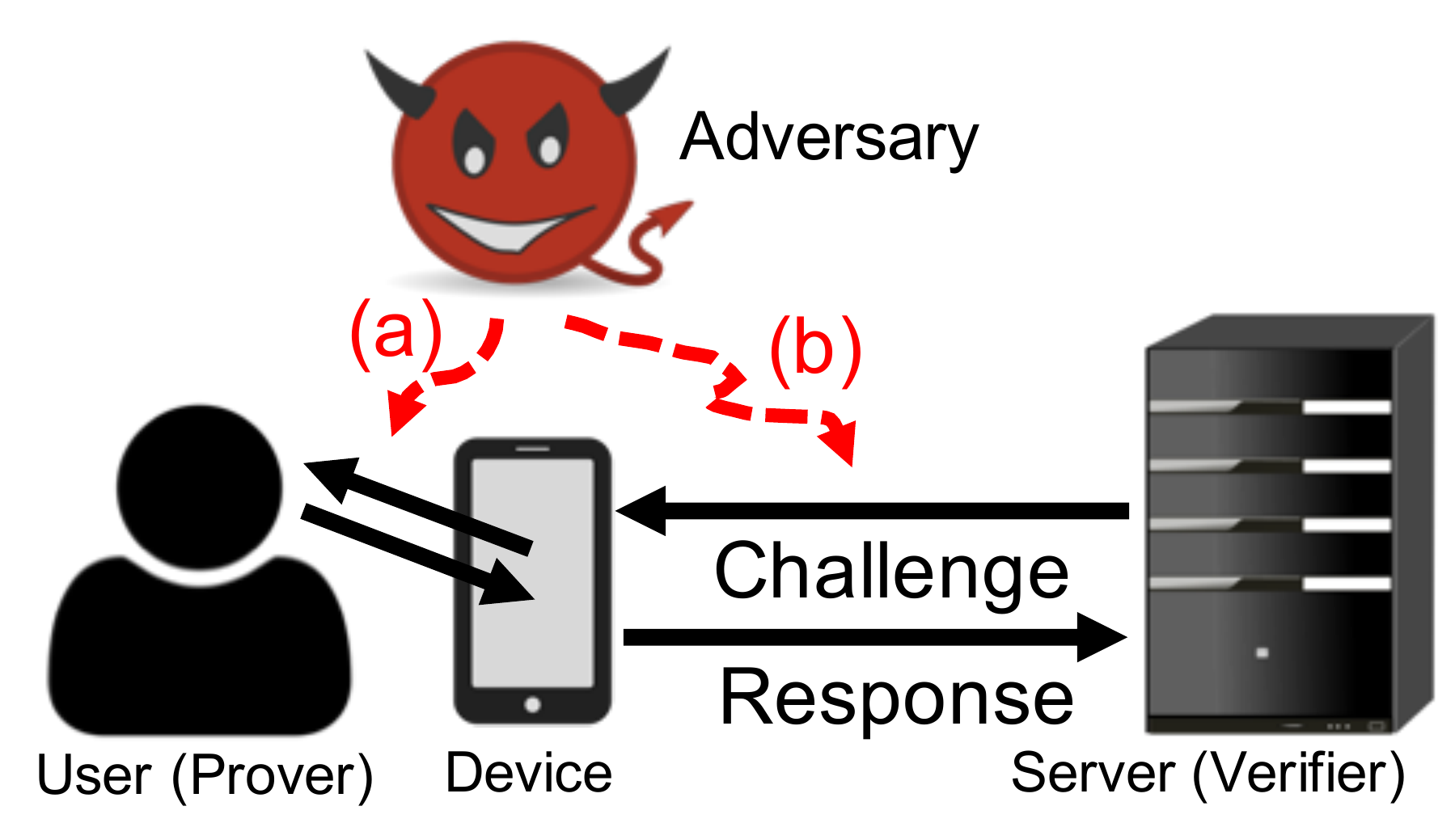}
    \vspace{-2mm}
  	\caption{The threat model under consideration. Adversary can also observe the interaction between the user and the device.}
 	\label{fig:process_diagram}
 	\vspace{2mm}
\end{figure}

\subsection{\texorpdfstring{$k$-out-of-$n$}{k-out-of-n} ORAS}
\label{sec:oras_scope}
In one class of ORAS the secret $S$ is a random set of $k$ items from a set of $n$ (publicly known) items\footnote{Examples of items are images~\cite{sasamoto2008undercover, dhamija2000deja, brostoff2000passfaces} or emoticons~\cite{chauhan2017behaviocog, asghar2013does, wiedenbeck2006design}.}. The elements of $S$ are called the secret items, and the remaining $n - k$ items will be referred to as decoy items. We shall call these $k$-out-of-$n$ ORAS. Different designs of these ORAS exist. In the following we focus on particular design elements which are promising in terms of both resistance to known attacks and usability (especially if employed in conjunction with other authentication factors such as behavioral biometrics). These design elements are:
\begin{itemize}
    \item \emph{Windowed Challenges:} A challenge is constructed by randomly selecting $l$ out of $n$ items. This can be visualized as a fixed window capable of enclosing $l$ items. The set of $n$ items is randomly shuffled each time, and the $l$ items within the window are the challenge items (hence the name). If the \emph{window size} $l$ is small, the user can recognize its secret items present in the challenge in a short amount of time. This is also desirable for deployment as a small set can be easily displayed on the small screens of smartphones~\cite{chauhan2017behaviocog}. However, if not designed carefully, these windowed challenges may compromise security. For instance, the Undercover \cite{sasamoto2008undercover} scheme requires \emph{at least} one secret item to be present in all challenges. This results in an inherent bias, with the secret items appearing more frequently than the decoy items. This bias was exploited by Yan et al.~\cite{yan2012limitations} in a frequency analysis attack to extract the entire set of secret items after only a small number of observations. Subsequently, Asghar et al.~\cite{asghar2013does} showed that if the windowed challenge of length $l$ is sampled uniformly at random from the set of all possible $\binom{n}{l}$ challenges, then the above mentioned frequency-based attack can be mitigated. 
    \item \emph{Random Weights:} Each of the $l$ items in the challenge is associated with a random integer, called its \emph{weight}, from the set $\mathbb{Z}_d$, for a fixed integer $d \ge 2$. Note that for each challenge the weights are randomly sampled anew.
    \item \emph{Modulus Operation:} The function $f$, to be mentally computed by the user, involves (at the minimum) summing the weights of the secret items present in the challenge and a modulo $d$ operation on the sum. Notice that by construction, a challenge might not even contain any of the $k$ secret items. We shall refer to it as the \emph{empty event} or \emph{empty case}, borrowing the term from~\cite{chauhan2017behaviocog}. How the function $f$ is computed in an empty case depends on the scheme, as we shall discuss shortly.
\end{itemize}

\begin{example} 
\label{ex:toy}
We illustrate a $k$-out-of-$n$ ORAS that satisfies the above design requirements. 
A windowed challenge can be represented by the $n$-element vector $\mathbf{c}$ whose $i$th element is the weight of the $i$th item, if present in the challenge, and 0 otherwise. With the same ordering, the secret can be represented as the binary vector $\mathbf{s}$ of Hamming weight $k$. One possible cognitive function $f$ is the dot product modulo $d$, i.e, the response $r$ is calculated as $\ip{\mathbf{c}}{\mathbf{s}} \bmod d$. 
Consider the example in Figure \ref{fig:sample_challenge}, which has a pool of $n=8$ items, with $k=3$ secret items. A random challenge of $l=4$ items has been sampled, with accompanying random values from $\mathbb{Z}_d = \mathbb{Z}_4$. The user computes $r = \ip{\mathbf{c}}{\mathbf{s}} \bmod d = \ip{\begin{pmatrix} 0 & 0 & 1 & 0 & 0 & 0 & 3 & 2\end{pmatrix}}{\begin{pmatrix} 0 & 1 & 1 & 0 & 0 & 0 & 1 & 0\end{pmatrix}} \\ \bmod 4 = 0$. The scheme is observation resilient as there are multiple candidates for the secret even after observing the challenge and this response. For example, the response 0 could have simply come from the weight of item D.
\end{example}
\begin{figure}[th!]
	\centering \includegraphics[width=0.50\columnwidth]{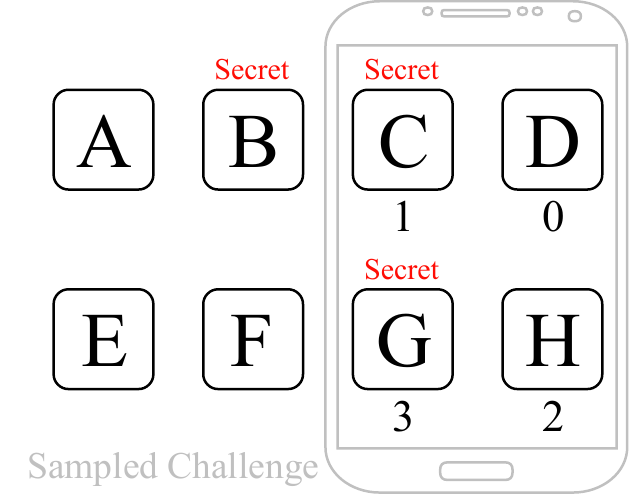}
  	\caption{Example ORAS Scheme, $(n,k,l,d) = (8,3,4,4)$}
 	\label{fig:sample_challenge}
\end{figure}

\subsection{\texorpdfstring{$k$-out-of-$n$}{k-out-of-n} ORAS Chosen for Analysis}
\label{sec:existing_protocols}
We introduce three previously proposed ORAS that fall in the category of $k$-out-of-$n$ ORAS described in Section~\ref{sec:oras_scope}. Note that with $O(n)$ challenge-response pairs in Example~\ref{ex:toy}, the attacker can construct the secret using Gaussian elimination. The cognitive function in the three protocols is designed to increase the challenge-response pairs required to recover the secret through Gaussian elimination.

\paragraph{BehavioCog (BC)}
The BehavioCog~\cite{chauhan2017behaviocog} scheme is the same as in Example~\ref{ex:toy}, except that it requires the user to submit a random response $r \in \mathbb{Z}_d$ in the case of an empty event (i.e., when none of the secret items are in the challenge). With this modification, Gaussian elimination requires $O(dn)$ challenge-response pairs~\cite{chauhan2017behaviocog}. One set of proposed parameters for the scheme is $(n, k, l, d) = (180, 14, 30, 5)$~\cite{chauhan2017behaviocog}, which we shall use in our analysis. We remark that BehavioCog was proposed with a behavioral biometric component to minimize authentication time. We disregard the biometric component and focus on the cognitive scheme.

\paragraph{FoxTail (FT)}
We chose the FoxTail scheme proposed by Asghar et al. ~\cite{asghar2013does} as a fix to secure the original FoxTail scheme~\cite{li-shum} against a frequency attack~\cite{yan2012limitations}. 
The parameters we use for the scheme are $(n, k, l, d) = (180, 14, 30, 4)$ 
to allow for comparison between schemes. The cognitive function involves an additional step after the modulo $d = 4$ operation: the user is required to respond with 0, if the result is 0 or 1, and respond with 1, if the result is 2 or 3. In the case of an empty event, the user simply returns the response 0. The resulting non-linear map means that Gaussian elimination through linearization requires $\binom{n}{2} + n = O(n^2)$ challenge-response pairs~\cite{asghar2015linearization}.

\paragraph{Hopper \& Blum (HB)}
The HB protocol~\cite{hopper2001secure} is one of the earliest ORAS proposed. The original proposal displays all $n$ items to the user, with accompanying random binary weights. We modify the scheme to utilize windowed challenges, choosing parameters $(n, k, l, d) = (180, 14, 30, 2)$ (similar to BehavioCog). The protocol requires the user to intentionally flip the response bit (note that $d = 2$) with a fixed probability $\eta < 0.5$. The windowless HB was subject to timing attacks with the noise parameter $\eta = 0.2$ \cite{vcagalj2015timing}; this value is maintained in this paper.\footnote{We note that the timing attack from \cite{vcagalj2015timing} is not applicable to the windowed HB protocol.} The HB protocol is based on the NP-Hard problem of learning parity in the presence of noise (LPN).

\paragraph*{Rounds and Sessions} Each authentication session consists of multiple challenge-response rounds. The number of rounds per session can be selected based on the success probability of randomly guessing the response (without knowledge of the secret). One common benchmark is 6-digit PIN, with a probability of randomly guessing the correct pin being $P_{\text{RG}} = 10^{-6}$~\cite{chauhan2017behaviocog}. In BC (with the above parameters), the attacker can successfully guess the response to a challenge with probability $0.256$. Therefore, to achieve the security level of $10^{-6}$, 10 rounds are required in a session. Similarly for FT, the answer could be guessed with probability $0.5$, thus requiring 20 rounds per session for the same security level. In HB protocol, the user is accepted if the fraction of wrong answers are at most $\eta$~\cite{hopper2001secure}. For $\eta = 0.2$, this gives $51$ rounds for a security level of $10^{-6}$. Since this number of rounds is impractical, we lower the security level for HB to $10^{-4}$, which gives us $34$ rounds per session.
In the remainder of this paper we will mostly use the number of rounds instead of sessions to discuss attacks, as the number of rounds within a session is ultimately at the discretion of the scheme designer.

\subsection{Other ORAS}
While we use $k$-out-of-$n$ ORAS as the basis for our analysis, the results are applicable to other ORAS. More specifically, the results are applicable to any ORAS that uses a modulus operation. In Section~\ref{sec:other_oras}, we shall give examples of these ORAS and our behavioral side channel attack on them.

\section{The Modulus Event and Associated Biases}
\label{sec:mod_bias}

Given a challenge, we say that a modulus event occurs if the submitted response involves a modulus operation. For $k$-out-of-$n$ ORAS, this happens if the sum of the secret items in a challenge is greater than $d$ (otherwise the user does not need to reduce the sum modulo $d$). In this section, using a generic $k$-out-of-$n$ ORAS, we show that
\begin{enumerate}
    \item Depending on the parameters and the cognitive function, there is an imbalance in the likelihood of a modulus or a no-modulus event given different response values, e.g., a response of 0 is more likely to indicate a modulus event.
    \item In a no-modulus event, the secret items have lower weights than the decoy items. Likewise in a modulus-event, the reverse is true.
\end{enumerate}
The first of these observations will be used as one of the features to determine a modulus/non-modulus event in our classifiers in Section~\ref{sec:classifier}. The second observation is the basis of our algorithm to retrieve the secret in Section \ref{sec:algorithms}. While the response itself indicates if the modulus event has occurred or not, user behavior while computing the function $f$ leaks further information about the event. This can be exploited by the adversary to increase confidence in predicting the modulus/no-modulus event to retrieve the secret.

Since the function $f$ is mentally computed by the user, the adversary cannot know if the modulus event has occurred by simply looking at challenge-response pairs. However, user behavior while computing $f$ leaks information about these events. 

In what follows, we mathematically demonstrate that given a generic $k$-out-of-$n$ ORAS, both above mentioned biases pertaining to the modulus event are linked to the (expected) number of secret items present in a challenge. The lower the number, the bigger the bias. Since this number is a function of the parameters $(n, k, l)$, scheme designers need to choose appropriate values of these parameter to ensure that the expected number of secret items is large to minimize the biases.

\subsection{Guessing a Modulus Event through Responses}
Let us demonstrate this bias with the help of an example. Let $G$ be the random variable representing the number of secret items present in a challenge. Thus, $G$ takes on values in the set $\{0, 1, \ldots, k\}$, where $k$ is the total number of secrets items. Let $g$ denote an instance of $G$ in a specific round of authentication.
\begin{example}
In Table \ref{tab:3_sec_weights}, the number of secret item's present in the challenge is $g=3$, and the responses are generated through the cognitive function in  Example~\ref{ex:toy} with the modulus $d=2$. Every combination of weights is equally probable (due to random sampling of weights). It is evident from the table that when the response is 0, it is more likely to be the result of having performed the modulus operation; whereas when the response is 1, it is more likely to be due to the absence of the modulus operation.
\end{example}

\begin{table}[t]
\centering
\caption{Modulo Bias in responses, 3 Secret Items, Binary weights.}
\label{tab:3_sec_weights}
\vspace{-2mm}
\resizebox{0.6\columnwidth}{!}{%
\begin{tabular}{| l | c | c |c |c |c |c |c |c |}
\hline

Secret Weight 1 & 0 & 0 & 0 & 0 & 1 & 1 & 1 & 1 \\
Secret Weight 2 & 0 & 0 & 1 & 1 & 0 & 0 & 1 & 1 \\
Secret Weight 3 & 0 & 1 & 0 & 1 & 0 & 1 & 0 & 1 \\ \hline
Modulus Event & \cellcolor{yellow}No & \cellcolor{yellow}No & \cellcolor{yellow}No & \cellcolor{green}Yes & \cellcolor{yellow}No & \cellcolor{green}Yes & \cellcolor{green}Yes & \cellcolor{green}Yes \\
Resulting Sum & \cellcolor{yellow}0 & \cellcolor{yellow}1 & \cellcolor{yellow}1 & \cellcolor{green}2 & \cellcolor{yellow}1 & \cellcolor{green}2 & \cellcolor{green}2 & \cellcolor{green}3 \\
User Response & \cellcolor{yellow}0 & \cellcolor{yellow}1 & \cellcolor{yellow}1 & \cellcolor{green}0 & \cellcolor{yellow}1 & \cellcolor{green}0 & \cellcolor{green}0 & \cellcolor{green}1 \\
\hline
\end{tabular}
\vspace{2mm}
}
\end{table}

We now generalize this to a generic ORAS.
Let the random variable $X$ denote the weight of an item in a challenge. Since each weight is sampled from a random uniform distribution over $d$, 
\[
\Pr(X = x) = \frac{1}{d}, \text{ for all } x \in \{0,1, \ldots , d-1\}.
\]
Let $Y$ be the random variable denoting the sum of the weights of the $g$ secret items. Then $Y$ takes on values from the set
\[
\{0,1,2, \ldots, (d-1)g\}.
\]
We would like to determine $\Pr(Y = y \mid g)$, from which we can determine the probability of a modulus event by evaluating $\Pr(Y \ge d \mid g)$. To compute $\Pr(Y=y \mid g)$, we need to find the number of different ways $g$ items with weights in $\mathbb{Z}_d$ can be summed to produce $y$. This is determined by the coefficient of $z^y$ in the expansion of the polynomial $(z^0 + z^1 + \cdots + z^{d-1})^g$ \cite[\S 1, pp. 23-24]{uspensky1937introduction}. 
Alternatively the probability can be evaluated without full expansion of the generating function via the following equation \cite[\S 1, p. 24]{uspensky1937introduction}:
\begin{equation}
\Pr(Y=y \mid g) = \frac{1}{d^{g}} \sum_{s=0}^{\lfloor \sfrac{y}{d} \rfloor} (-1)^s \binom{g}{s} \binom{y - sd + g -1}{g-1}.
\label{eqn:prob_sum}
\end{equation}
Note that a modulus operation is \emph{not} required when $Y=y < d$,
whose probability is given by
\begin{equation}
\label{eqn:no-mod-event}
\Pr(Y<d \mid g) = \sum_{y=0}^{d-1} \Pr(Y=y \mid g).
\end{equation}
Similarly the probability that a modulus operation is required is
\begin{equation}
\label{eqn:mod-event}
\Pr(Y\ge d \mid g) = \sum_{y=d}^{(d-1)g} \Pr(Y=y \mid g).
\end{equation}
Equation \ref{eqn:prob_sum} is dependent on $g$; in turn the probability of $g$ items appearing in a challenge is dependent on $n, k, \text{and } l$ of the authentication scheme:
\begin{equation}
\Pr(G = g) = \frac{\binom{n-k}{l-g}\binom{k}{g}}{\binom{n}{l}}.
\label{eqn:prob_g}
\end{equation}

In Figure~\ref{fig:mod_given_response_toy}, we plot the probabilities of the modulus and no-modulus event given different values of $g$ calculated through Eqs.~\ref{eqn:no-mod-event} and \ref{eqn:mod-event}. We see that if $g$ is small, the no-modulus event is highly probable and a given response value might be biased towards a modulus or a no-modulus event. While this bias does not directly reveal information about the user's secret items, it can be used as an indicator of a particular challenge involving a modulus or a no-modulus event. And, as we shall show in the next section (Section~\ref{sec:weight_bias}), the knowledge of a modulus/no-modulus event, in turn, leaks information about the user's secret. Thus, from a security point of view, it is desirable to minimize this bias. This can be done by increasing the value of $g$, which makes the no-modulus event increasingly unlikely to happen (irrespective of the response). Asymptotically, we have the following result:

\begin{theorem}
As $g \to \infty$, the probability of the modulus event approaches one, i.e., $\Pr(Y\ge d) \to 1$.
\label{theorem:prob_mod_1}
\end{theorem}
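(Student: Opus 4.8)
The plan is to prove the complementary statement, namely that the probability of the \emph{no-modulus} event vanishes: since $\Pr(Y \ge d \mid g) = 1 - \Pr(Y < d \mid g)$, it suffices to show $\Pr(Y < d \mid g) \to 0$ as $g \to \infty$. I would work directly from Equation~\ref{eqn:no-mod-event}, which expresses this probability as the finite sum $\sum_{y=0}^{d-1}\Pr(Y=y \mid g)$, and substitute the closed form from Equation~\ref{eqn:prob_sum}.

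The key simplification is that the summation range $0 \le y \le d-1$ forces $\lfloor y/d \rfloor = 0$, so the alternating sum over $s$ in Equation~\ref{eqn:prob_sum} collapses to its single $s=0$ term. This gives $\Pr(Y=y \mid g) = d^{-g}\binom{y+g-1}{g-1}$ for each such $y$. Summing over $y$ from $0$ to $d-1$ and applying the hockey-stick identity $\sum_{y=0}^{d-1}\binom{y+g-1}{g-1} = \binom{g+d-1}{g}$ yields the exact expression
\[
\Pr(Y < d \mid g) = \frac{1}{d^{g}}\binom{g+d-1}{g} = \frac{1}{d^{g}}\binom{g+d-1}{d-1}.
\]

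To conclude, I would observe that $\binom{g+d-1}{d-1} = \frac{(g+1)(g+2)\cdots(g+d-1)}{(d-1)!}$ is a polynomial in $g$ of fixed degree $d-1$, whereas $d^{g}$ grows exponentially because $d \ge 2$. Hence the ratio tends to $0$ as $g \to \infty$, and therefore $\Pr(Y \ge d \mid g) \to 1$, as claimed.

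The main obstacle is recognizing the collapse of the alternating sum and correctly invoking the hockey-stick identity; if one prefers to avoid the identity, the same conclusion follows from the crude bound $\binom{y+g-1}{g-1} \le (g+d)^{d-1}$, which already exhibits the polynomial-over-exponential decay. As an entirely self-contained alternative, a purely probabilistic argument also works: writing $Y = \sum_{i=1}^{g} X_i$ with the $X_i$ i.i.d.\ uniform on $\{0,\dots,d-1\}$, the event $\{Y < d\}$ forces at most $d-1$ of the $X_i$ to be nonzero, and since each is nonzero with probability $(d-1)/d \ge 1/2$, a binomial-tail bound drives this probability to $0$. I would present the generating-function route as the primary proof, since it reuses Equation~\ref{eqn:prob_sum} directly and produces an exact formula.
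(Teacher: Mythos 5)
Your proposal is correct and follows essentially the same route as the paper: both pass to the complementary no-modulus probability, substitute Equation~\ref{eqn:prob_sum}, observe that $0 \le y < d$ forces the alternating sum to collapse to its $s=0$ term, and then conclude by a polynomial-over-exponential comparison (the paper bounds each of the $d$ resulting terms as $o(d^g)$ rather than first combining them via the hockey-stick identity, but that is a cosmetic difference). Your closed form $\Pr(Y<d\mid g)=d^{-g}\binom{g+d-1}{d-1}$ and the auxiliary probabilistic argument are both valid, relying as the paper implicitly does on $d\ge 2$.
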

\begin{proof}
Please see Appendix~\ref{proof:prob_mod_1}.
\end{proof}

In practice the likelihood of the no-modulus event vanishes much rapidly with an increasing $g$. For instance, for the case of $d = 2$, the probability is $0.1875$ with $g = 5$, $0.0107$ with $g = 10$, and $0.0005$ with $g = 15$. We remark that the number of items present in a challenge $g$ is a random variable dependent on the scheme parameters. Thus, to ensure the responses do not exhibit a bias towards the modulus or no-modulus event, the expected value of $g$ needs to be high in an ORAS, which can be done by a combination of increasing $k$ or $l$, and/or decreasing $n$ (c.f. Equation 2). We will return to this in Section~\ref{sec:discussion}.

\begin{figure}[t]
	\centering
    \includegraphics[width=0.6\columnwidth]{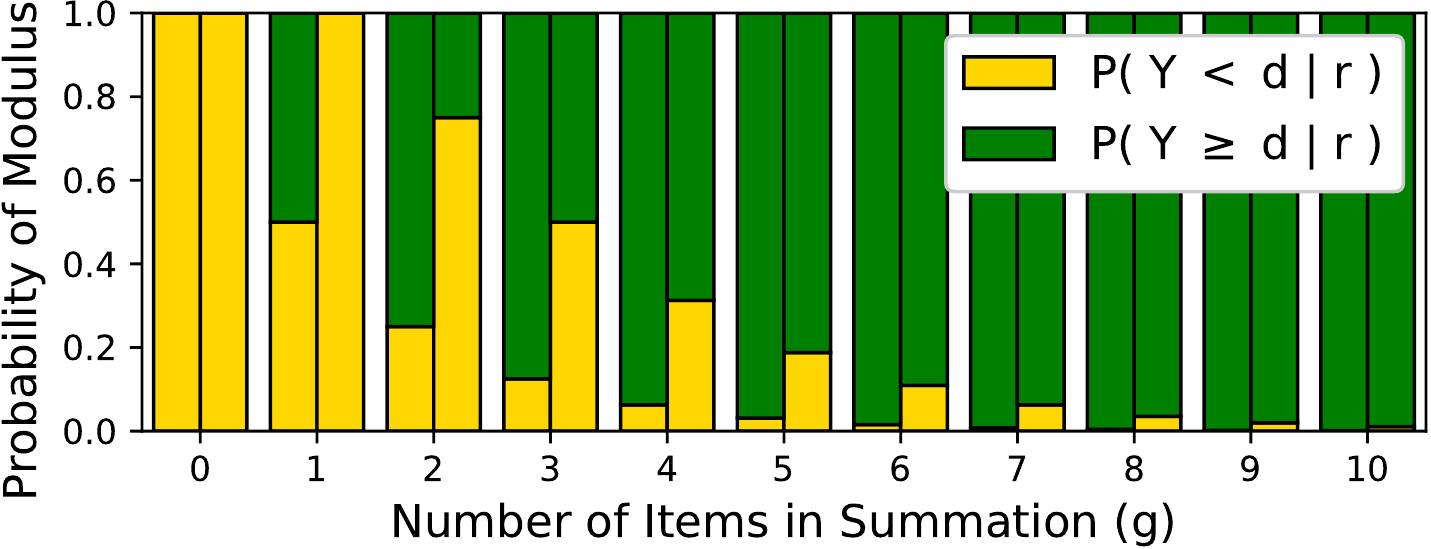}
    \vspace{-2mm}
  	\caption{The probability of the modulus and no-modulus events given a user response against the number of secret items present in a challenge $g$ when $d = 2$. The probability of modulus event increases with increasing $g$. The left and right hand columns respectively represent a user response of 0 and 1.}
 	\label{fig:mod_given_response_toy}
 	\vspace{2mm}
\end{figure}

\subsection{Weight Bias in a Modulus Event}
\label{sec:weight_bias}
We now show that given a modulus or a no-modulus event the expected weight of the secret items is biased away from the expected weight of the decoy items.
Thus, the knowledge of a modulus/no-modulus event leaks information about the secret items. We will use this observation in our algorithm to retrieve the secret in Section~\ref{sec:algorithms}. For now we demonstrate this bias analytically using a generic $k$-out-of-$n$ ORAS.

Recall that $X$ denotes the weight of an item in a challenge. Clearly, the expected weight of any item within a challenge is $E[X] = (d-1)/2$. Denote by $X_s$, the random weight of a secret item. By construction, we have $E[X_s] = E[X] = (d-1)/2$. However, given the knowledge of a no-modulus event, the conditional expectation might not be the same. To see this, first note that
\begin{align}
E[X] = E[X_s] &= E[X_s \mid Y < d]\Pr(Y < d) \nonumber\\
            &+ E[X_s \mid Y \ge d]\Pr(Y \ge d), \label{eq:exptotsec}
\end{align}
where the conditional expectations are conditioned by no-modulus and modulus events, respectively. Since $Y$ denotes the additive weight of $g$ secret items, we have
$$E[X_s] = \frac{1}{g} E[Y].$$
We first show that the expected weight of secret items is less than or equal to the expected weight of decoy items in a no-modulus event, i.e., $E[X_s \mid Y < d] \le E[X]$. The following lemma is used in the proof.
\begin{lemma}
Let $g \ge 2$ be an integer, and let $p$ be a strictly positive function, i.e., $p(i) > 0$, for all $i$ in the domain of $p$. Then, for all $d \ge 1$ 
\[
\frac{1}{g} \sum_{i=0}^{d} i p(i) < \frac{d}{2} \sum_{i=0}^{d}  p(i).
\]
\label{lem:d-ineq}
\end{lemma}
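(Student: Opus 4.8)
The plan is to prove the inequality by chaining two elementary bounds, taking care about exactly where the strictness comes from. The two facts I would exploit are that the prefactor $1/g$ is at most $1/2$ (because $g \ge 2$), and that every summation index satisfies $i \le d$. Neither bound alone is obviously strict, so the main thing to get right is locating a source of strict inequality.

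First I would bound the prefactor. Since $g \ge 2$ and every summand $i\,p(i)$ is nonnegative (as $i \ge 0$ and $p(i) > 0$), I have $\frac{1}{g} \sum_{i=0}^{d} i\,p(i) \le \frac{1}{2} \sum_{i=0}^{d} i\,p(i)$. I would note that this step is only a weak inequality, and in fact an equality when $g = 2$, so strictness cannot be recovered here.

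Next I would compare $\frac{1}{2}\sum_{i=0}^{d} i\,p(i)$ with $\frac{d}{2}\sum_{i=0}^{d} p(i)$ by forming their difference. After clearing the common factor of $1/2$, it suffices to show $\sum_{i=0}^{d} i\,p(i) < d \sum_{i=0}^{d} p(i)$, equivalently $\sum_{i=0}^{d} (d - i)\,p(i) > 0$. Each summand $(d-i)\,p(i)$ is nonnegative because $i \le d$ and $p(i) > 0$, and the $i = 0$ term equals $d\,p(0)$, which is strictly positive since $d \ge 1$ and $p(0) > 0$. Hence the entire sum is strictly positive, which gives the strict inequality. Chaining the weak bound from the first step with this strict bound yields the claim.

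The one subtlety I would flag is that the strictness is supplied \emph{entirely} by the second step, namely the $i \le d$ comparison, and not by the prefactor bound, which degenerates to an equality precisely at $g = 2$. Because $d \ge 1$, the $i = 0$ term alone guarantees a strictly positive gap, so no separate case analysis on $d$ or on the support of $p$ is required; this is what makes the argument clean rather than an obstacle.
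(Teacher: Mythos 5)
Your proof is correct, and it reaches the conclusion by a genuinely more direct route than the paper. Both arguments share the same first move — reduce to the factor $\tfrac12$ via $g \ge 2$ and the nonnegativity of the summands $i\,p(i)$ — but where the paper then establishes $\tfrac12\sum_{i=0}^{d} i\,p(i) < \tfrac{d}{2}\sum_{i=0}^{d} p(i)$ by induction on $d$ (base case $d=1$, then an inductive step), you obtain it in one stroke by rewriting the difference as $\tfrac12\sum_{i=0}^{d}(d-i)\,p(i)$ and observing that every term is nonnegative while the $i=0$ term equals $d\,p(0) > 0$. The source of strictness is the same in both proofs (the $i=0$ term, exploited in the paper's base case as $0\cdot p(0) < 1\cdot p(0)$), but your version dispenses with the induction entirely and makes the location of the strict gap explicit, which is arguably cleaner. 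Your remark that the prefactor step degenerates to equality at $g=2$, so that strictness must come from the second comparison, is exactly the right point to flag and is consistent with how the paper's argument is structured.
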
%
\begin{proof}
See Appendix~\ref{sec:lem1_proof}.
\end{proof}
\begin{theorem} 
\label{theo:weight_bias}
Let $g \ge 1$. Then
\begin{enumerate}
    \item $E[X_s \mid Y < d] = E[X]$, if $g = 1$.
    \item $E[X_s \mid Y < d] < E[X]$, if $g \ge 2$.
\end{enumerate}
\end{theorem}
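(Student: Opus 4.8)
The plan is to express the conditional expectation $E[X_s \mid Y < d]$ as a weighted average of the possible weight-sums and then reduce the claim to the inequality already packaged in Lemma~\ref{lem:d-ineq}.

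First I would dispose of part (1). When $g = 1$, the single secret item has weight $Y = X_s$, which is uniform on $\{0, 1, \ldots, d-1\}$; hence $Y \le d - 1 < d$ always holds, the conditioning event $\{Y < d\}$ has probability one, and $E[X_s \mid Y < d] = E[X_s] = E[X] = (d-1)/2$. This is immediate and needs no further machinery.

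For part (2), I would first record the symmetry reduction. Writing $Y = W_1 + \cdots + W_g$ as a sum of the $g$ i.i.d. uniform secret weights, the conditioning event $\{Y < d\}$ is symmetric in the $W_i$, so the exchangeability argument already used to obtain $E[X_s] = \tfrac{1}{g} E[Y]$ carries over to the conditional law, giving $E[X_s \mid Y < d] = \tfrac{1}{g} E[Y \mid Y < d]$. Writing $p(y) := \Pr(Y = y \mid g)$ and using that the no-modulus event is exactly $\{Y \in \{0, \ldots, d-1\}\}$, this becomes
\[
E[X_s \mid Y < d] = \frac{1}{g}\cdot\frac{\sum_{y=0}^{d-1} y\, p(y)}{\sum_{y=0}^{d-1} p(y)}.
\]
It remains to show this quantity is strictly less than $E[X] = (d-1)/2$. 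The key step is to invoke Lemma~\ref{lem:d-ineq} with the integer $d-1$ playing the role of the lemma's upper index (i.e., substituting $d \mapsto d-1$). I would verify its two hypotheses: (i) $g \ge 2$ holds by assumption, and (ii) $p(y) > 0$ for every $y \in \{0, 1, \ldots, d-1\}$, which follows because each such $y$ is realizable as $y + 0 + \cdots + 0$ (one secret weight equal to $y \le d-1$, the rest zero), so every term in both sums is strictly positive and $\Pr(Y < d) > 0$. Since the scheme fixes $d \ge 2$, we have $d - 1 \ge 1$ as the lemma requires. The lemma then yields
\[
\frac{1}{g}\sum_{y=0}^{d-1} y\, p(y) < \frac{d-1}{2}\sum_{y=0}^{d-1} p(y),
\]
and dividing through by $\Pr(Y<d) = \sum_{y=0}^{d-1} p(y) > 0$ gives exactly $E[X_s \mid Y < d] < (d-1)/2 = E[X]$.

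The main obstacle is not any heavy computation but the careful bookkeeping of the index range: the no-modulus event caps the sum at $d-1$, not $d$, so the lemma must be applied with $d-1$ in place of $d$ for the right-hand coefficient $(d-1)/2$ to match $E[X]$ exactly. Getting this shift right, together with confirming the strict positivity of $p$ on the truncated range (and implicitly using $d \ge 2$ to avoid the degenerate all-zero-weight case, where equality would hold even for $g \ge 2$), is where the care is needed; the substantive inequality itself is delegated to Lemma~\ref{lem:d-ineq}.
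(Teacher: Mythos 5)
Your proposal is correct and follows essentially the same route as the paper: reduce to $E[X_s \mid Y<d] = \tfrac{1}{g}E[Y \mid Y<d]$, expand the conditional expectation as a weighted sum over $y \in \{0,\ldots,d-1\}$, and invoke Lemma~\ref{lem:d-ineq} (with the upper index $d-1$ in place of the lemma's $d$) to obtain the strict bound $(d-1)/2 = E[X]$. Your explicit bookkeeping of that index shift, the positivity of $\Pr(Y=y)$ on the truncated range, and the role of $d \ge 2$ makes precise details the paper's proof leaves implicit, but the argument is the same.
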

\begin{proof}

For part (1), when $g = 1$, only one secret item is present, thus it's weight will be uniformly sampled from $0$ to $d-1$, and therefore in this case, $E[X_s | Y < d] = E[X]$. 

For part (2), we have
\begin{align*}
    E[X_s \mid Y<d] &= \frac{1}{g}E(Y \mid Y<d) \\
                    &= \frac{1}{g} \sum_{y} y \Pr(Y = y \mid Y < d) \\
                    &= \frac{1}{g} \sum_{y} y \frac{\Pr(Y = y, Y < d)}{\Pr(Y < d)} \\
                    &= \frac{1}{g} \sum_{y=0}^{d - 1} y \frac{\Pr(Y = y)}{\Pr(Y < d)}.
\end{align*}
Invoking Lemma~\ref{lem:d-ineq}:
\begin{align*}
    E[X_s \mid Y<d] &< \frac{1}{\Pr(Y < d)}\frac{d-1}{2} \sum_{y=0}^{d - 1} {\Pr(Y = y)} \\
                    &= \frac{d-1}{2} \frac{\sum_{y=0}^{d-1}  P(Y=y)}{\sum_{y=0}^{d-1}  P(Y=y)}
                    = E[X].\tag*{\qedhere}
\end{align*}\qedhere
\end{proof}
From this, it follows that expected weight of secret items is strictly greater than the expected weight of decoy items in a modulus event. That is:
\begin{corollary}
\label{cor:ex-mod}
$E[X_s \mid Y \ge d] > E[X]$.
\end{corollary}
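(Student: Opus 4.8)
The plan is to read Corollary~\ref{cor:ex-mod} as a direct consequence of the total-expectation identity in Equation~\eqref{eq:exptotsec} together with the strict inequality just established in Theorem~\ref{theo:weight_bias}(2). The key structural observation is that, since $\Pr(Y<d)+\Pr(Y\ge d)=1$, Equation~\eqref{eq:exptotsec} expresses $E[X]$ as a \emph{convex combination} of the two conditional expectations $E[X_s\mid Y<d]$ and $E[X_s\mid Y\ge d]$. A convex combination cannot lie strictly above both of its endpoints nor strictly below both; hence once one endpoint is known to lie strictly below $E[X]$, the other is forced strictly above it.

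First I would fix notation and set $p=\Pr(Y<d)$ and $q=\Pr(Y\ge d)$, so that Equation~\eqref{eq:exptotsec} reads $E[X]=p\,E[X_s\mid Y<d]+q\,E[X_s\mid Y\ge d]$ with $p+q=1$. I would note that the corollary is only meaningful when the modulus event can actually occur, i.e.\ when $g\ge 2$: for $g=1$ a single secret weight never reaches $d$, so $\{Y\ge d\}$ has probability zero and the conditioning is undefined. This implicit assumption $g\ge 2$ also places us squarely in the regime covered by Theorem~\ref{theo:weight_bias}(2).

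Next I would verify that both weights are strictly positive, which is what makes the rearrangement legitimate. The event $\{Y<d\}$ has positive probability because all $g$ weights can equal $0$, giving $p>0$; and the event $\{Y\ge d\}$ has positive probability because setting every weight to $d-1$ yields $Y=(d-1)g\ge d$ whenever $g\ge 2$ and $d\ge 2$, giving $q>0$. With $q>0$ I can solve Equation~\eqref{eq:exptotsec} for the modulus conditional expectation and subtract $E[X]$, obtaining
\[
E[X_s\mid Y\ge d]-E[X]=\frac{p}{q}\bigl(E[X]-E[X_s\mid Y<d]\bigr).
\]
Theorem~\ref{theo:weight_bias}(2) gives $E[X]-E[X_s\mid Y<d]>0$, and $p/q>0$, so the right-hand side is strictly positive, yielding $E[X_s\mid Y\ge d]>E[X]$ as claimed.

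I expect there to be essentially no hard step here: the entire content is the convex-combination rearrangement, and the only point needing a moment's care is confirming that both conditioning events have nonzero probability, so that the conditional expectations are well defined and the division by $q$ is valid. That check is precisely what pins down the implicit hypothesis $g\ge 2$, matching the case split already present in Theorem~\ref{theo:weight_bias}.
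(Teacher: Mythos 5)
Your proposal is correct and follows essentially the same route as the paper: both rearrange the total-expectation identity in Eq.~\eqref{eq:exptotsec} using the strict inequality from Theorem~\ref{theo:weight_bias}(2) to force the modulus-conditioned expectation above $E[X]$. Your only addition is the explicit check that both conditioning events have positive probability (hence $g \ge 2$), which the paper leaves implicit but which does not change the argument.
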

\begin{proof}
When $Y \ge d$, i.e., a modulus event, $g$ is necessarily $\ge 2$. From Eq.~\ref{eq:exptotsec} and Theorem~\ref{theo:weight_bias} part (2), we have

\begin{align*}
   & E[X] < E[X] \Pr(Y < d) + E[X_s \mid Y \ge d]\Pr(Y \ge d) \\
 \Rightarrow  &E[X] (1 - \Pr(Y < d)) < E[X_s \mid Y \ge d]\Pr(Y \ge d) \\
 \Rightarrow  &E[X] < E[X_s \mid Y \ge d] \qedhere
\end{align*}
\end{proof}

Thus, given a modulus event, weights of the secret items tend to be higher than decoy items. The reverse is true in a no-modulus event. We note that when $g = 0$, i.e., the empty event, the expected weight of the secret items is undefined (since they do not exist in the challenge).

Recall from Theorem~\ref{theorem:prob_mod_1} that as $g$ increases, the probability of the no-modulus event approaches zero. Therefore, increasing $g$ should also minimize the weight bias. This is demonstrated by the following corollary which shows that as $g$ increases, the expected weight of a secret item approaches the global expectation $E[X]$.

\begin{corollary}
\label{cor:no-weightbias}
As $g \to \infty$, $E[X_s \mid Y \ge d] \to E[X]$.
\end{corollary}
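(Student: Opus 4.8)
The plan is to leverage the total-expectation decomposition already established in Equation~\ref{eq:exptotsec} and treat it as a single scalar identity that can be solved for the quantity of interest. Starting from
\[
E[X] = E[X_s \mid Y < d]\Pr(Y < d) + E[X_s \mid Y \ge d]\Pr(Y \ge d),
\]
I would isolate the modulus-event conditional expectation by algebraic rearrangement, obtaining
\[
E[X_s \mid Y \ge d] = \frac{E[X] - E[X_s \mid Y < d]\,\Pr(Y < d)}{\Pr(Y \ge d)}.
\]
The entire argument then reduces to controlling the behavior of the two factors on the right-hand side as $g \to \infty$, namely the probability $\Pr(Y < d)$ and the conditional expectation $E[X_s \mid Y < d]$.

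The first factor is immediate from the earlier asymptotic result. By Theorem~\ref{theorem:prob_mod_1}, $\Pr(Y \ge d) \to 1$ as $g \to \infty$, and therefore $\Pr(Y < d) = 1 - \Pr(Y \ge d) \to 0$. This handles both the vanishing numerator correction and the denominator normalization simultaneously.

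The key remaining step, and the one requiring slightly more care, is to argue that the correction term $E[X_s \mid Y < d]\,\Pr(Y < d)$ vanishes rather than producing an indeterminate product. Here I would invoke the fact that the weight $X_s$ of a single secret item always lies in $\{0, 1, \ldots, d-1\}$, and that $d$ is a fixed scheme parameter independent of $g$. Consequently the conditional expectation is uniformly bounded, $0 \le E[X_s \mid Y < d] \le d - 1$, regardless of how large $g$ becomes. A bounded quantity multiplied by $\Pr(Y < d) \to 0$ therefore tends to $0$. I expect this uniform boundedness observation to be the main (and only real) obstacle, since without fixing $d$ the product could in principle misbehave; once it is noted, the rest is routine.

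Finally I would combine these limits: the numerator converges to $E[X] - 0 = E[X]$ and the denominator converges to $1$, so by the algebra of limits $E[X_s \mid Y \ge d] \to E[X]$, which is precisely the claim. This also dovetails conceptually with Corollary~\ref{cor:ex-mod}, showing that the strict upward bias of secret-item weights in a modulus event shrinks back to the global mean as the expected number of secret items per challenge grows.
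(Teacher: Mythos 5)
Your proposal is correct and follows essentially the same route as the paper, which simply cites Theorem~\ref{theorem:prob_mod_1} to get $\Pr(Y<d)\to 0$ and then says the result follows from Eq.~\ref{eq:exptotsec}. You additionally make explicit the uniform bound $0 \le E[X_s \mid Y < d] \le d-1$ needed to rule out an indeterminate product, a detail the paper leaves implicit but which is a worthwhile clarification rather than a different argument.
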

\begin{proof}
From Theorem~\ref{theorem:prob_mod_1}, we have $\Pr(Y \ge d) \to 1$ as $g \to \infty$. Consequently, $\Pr(Y < d) \to 0$. The result then follows from Eq.~\ref{eq:exptotsec}.
\end{proof}

Thus, scheme designers can minimize any weight bias by increasing the parameters $(k, l)$, or decrease ($n$) to increase the expected value of $g$. Once again the above result is asymptotic, and in practice the bias in expected weights vanishes quickly by a moderate increase in $g$ since the probability of the no-modulus event decreases quickly. Unfortunately, the three ORASes under focus, do not have a sufficiently large expected value of $g$, and are susceptible to revealing information about the secret items through this bias.

\subsection{Biases in Specific ORAS}

\begin{figure}
  \begin{subfigure}[t]{\columnwidth}
    \centering
    \includegraphics[width=0.6\columnwidth]{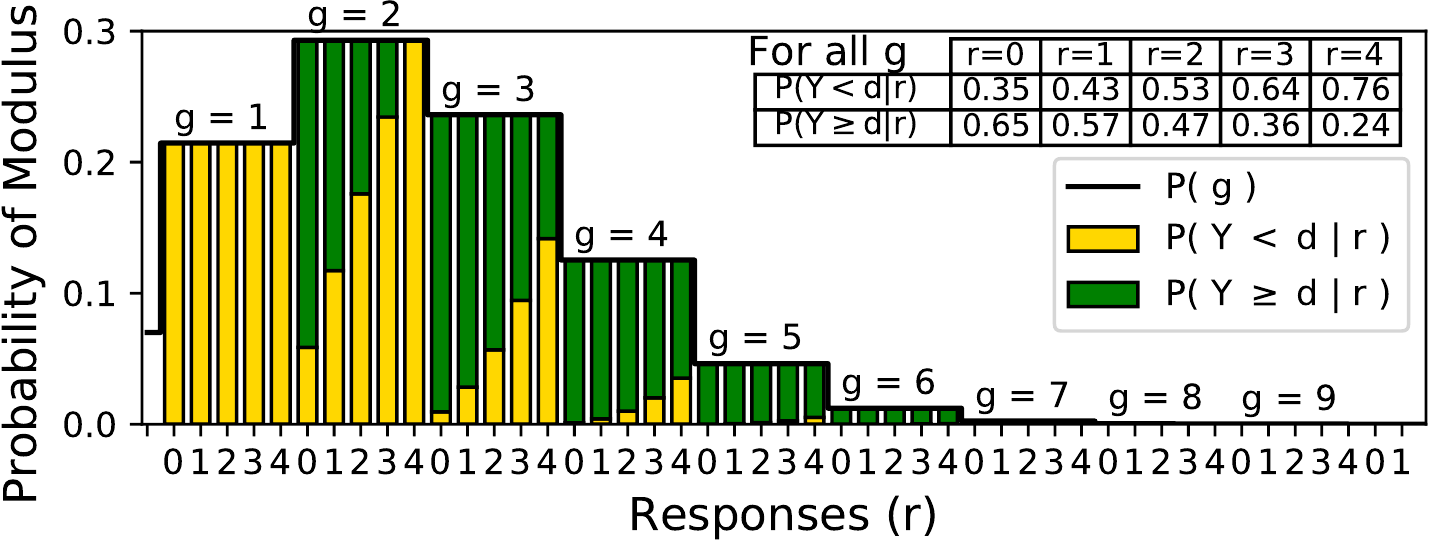}
    \vspace{-2mm}
    \caption{BehavioCog Protocol $(180, 14, 30, 5)$}
    \label{fig:BC_mod_bias}
  \end{subfigure}
  \vspace{2mm}
  \begin{subfigure}[b]{\columnwidth}
    \centering
    \includegraphics[width=0.6\columnwidth]{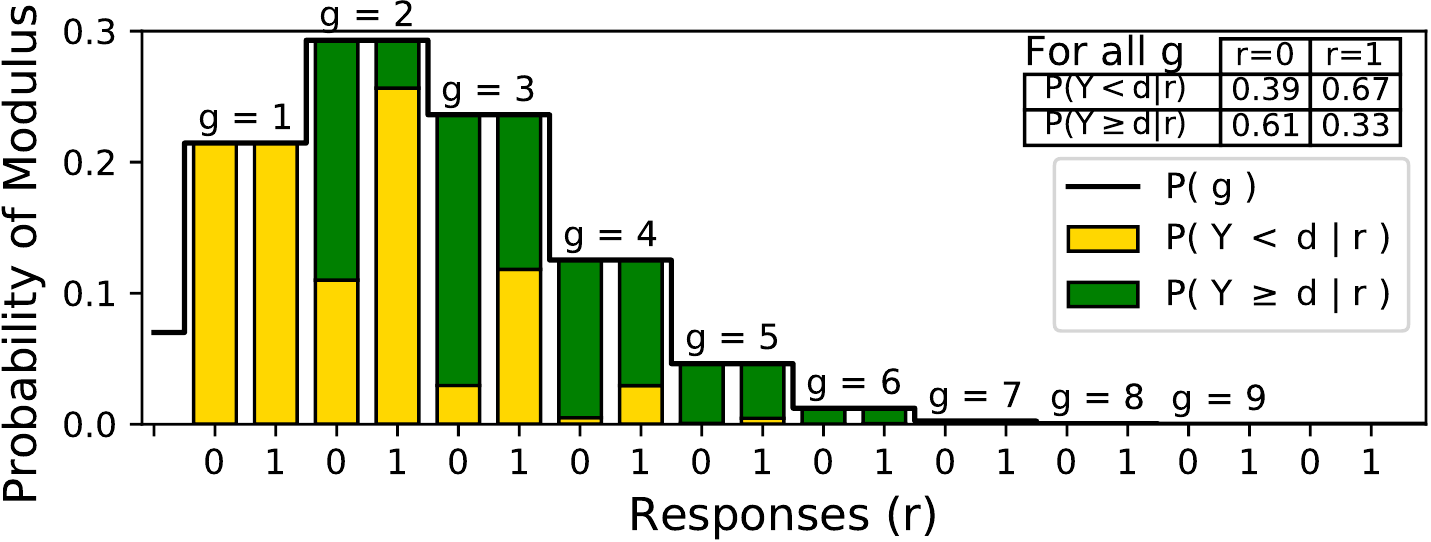}
    \vspace{-2mm}
    \caption{Foxtail Protocol $(180, 14, 30, 4)$}
    \label{fig:FT_mod_bias}
  \end{subfigure}
  \vspace{2mm}
  \begin{subfigure}[b]{\columnwidth}
    \centering
    \includegraphics[width=0.6\columnwidth]{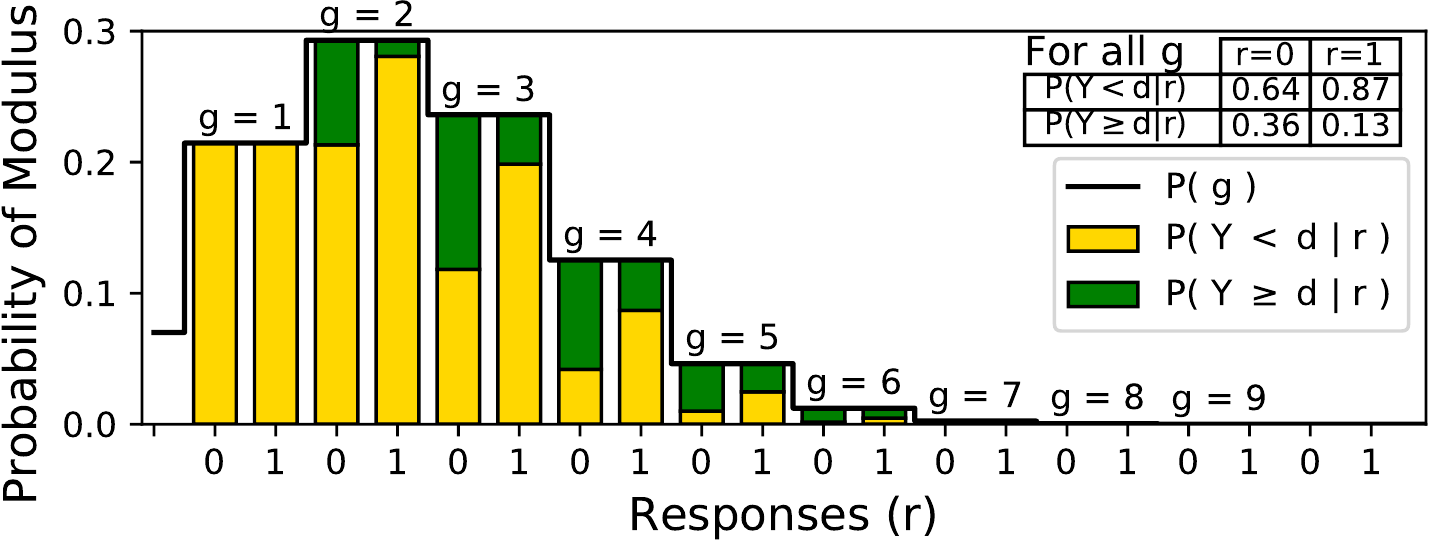}
    \vspace{-2mm}
    \caption{HopperBlum Protocol $(180, 14, 30, 2)$}
    \label{fig:HB_mod_bias}
  \end{subfigure}
  \vspace{-7mm}
  \caption{Probabilities of modulus and no-modulus events given different response values in $k$-out-of-$n$ ORAS. The probabilities are given for different values of $g$, showing the normalized probability for the event of the modulus operation. The overall probabilities irrespective of $g$ are given in the table.}
  \label{fig:mod_bias}
  \vspace{2mm}
\end{figure}

We now highlight these biases in the three instances of ORAS: BehavioCog, FoxTail, and HopperBlum. The cognitive functions in each of these schemes are slightly more involved than a simple mod operation on the sum (e.g., response flipping in HB). Therefore, the analytical results on the generic ORAS may not completely reflect the biases in these schemes.

\subsubsection{Guessing a Modulus Event through Responses}
\label{sec:mod_bias_response}

First consider the BehavioCog scheme with parameters $(n, k, l, d) = (180, 14, 30, 5)$. Figure~\ref{fig:BC_mod_bias} shows the probability of a modulus and a no-modulus event given different response values broken down across different values of $g \ge 1$. The inset table in the figure shows the two probabilities irrespective of the value of $g$. Clearly, a higher response value indicates that it is more likely a no-modulus event. In contrast, lower response values are more likely to be a result of a modulus event. Not surprisingly, the probability of the no-modulus event decreases with increasing $g$. Recall that in BehavioCog, the user enters a random response in case of the empty event. However, the bias shown in the figure is for $g \ge 1$. The inset table on the other hand shows includes the empty event as well, and hence shows probabilities for all $g$.

The corresponding biases in the Foxtail and HB protocols are shown in Figures~\ref{fig:FT_mod_bias} and \ref{fig:HB_mod_bias}, respectively. Recall that both Foxtail and HB have the response space $\{0, 1\}$. In Foxtail the response is ``rounded'' after a mod 4 operation, and in HB it is flipped with a fixed probability $\eta$ (which we fix to $0.2$). In Foxtail, we see that the response 0 is more likely to be from a non-modulus event, whereas the reverse is true of response 1. On the other hand, both responses in HB are more likely due to a no-modulus event, with response 1 being considerably more biased towards the no-modulus event.
Thus, the bias of a particular response towards a modulus event depends on the scheme parameters as well as the cognitive function.

\subsubsection{Weight bias in a Modulus Event}
Table~\ref{tab:e_weight} shows the expected weight of secret item(s) given a modulus and a no-modulus event for all three schemes. The expected weights are also shown against the number of secret items present in the challenge. The case $g = 1$ obviously does not involve any modulus operation, and so the expected weight in this case is equal to overall expectation. This is shown in the table with the row labelled $E[X_s]$. In all three protocols, the expected weights of the secret items given a no-modulus event is lower than the overall expectation, and decreases further as the number of secret items $g$ increases. However, this also means that the no-modulus event becomes almost unlikely to occur. The expected weight of the secret items in a modulus event is higher for smaller values of $g$ and approaches the overall expectation as we increase $g$. Noting in Figure~\ref{fig:mod_bias} that higher values of $g$, say $g \ge 4$, the bias is less profound and is less likely to occur in a challenge (with the given parameters).

\begin{table}[t]
\centering
\caption{Expected weights $E(X_s)$ of secret items.}
\label{tab:e_weight}
\vspace{-2mm}
\resizebox{0.6\columnwidth}{!}{%
\begin{tabular}{|@{ }l@{ }r@{ }|@{ }c@{ }|@{ } l@{\quad}l@{\quad}l@{\quad}l@{\quad}l@{\quad} l@{\quad}l@{\quad}l@{\quad}l@{ }|}
\hline
\multicolumn{2}{|@{ }c@{ }|@{ }}{$E(X_s | Y,g)$} & All $g$ & $g$=1    & $g$=2    & $g$=3    & $g$=4    & $g$=5    & $g$=6    & $g$=7    & $g$=8    & $g$=9    \\ \hline
BC & $Y<d$ & 1.20    & 2.00 & 1.33 & 1.00 & 0.80 & 0.67 & 0.57 & 0.50 & 0.44 & 0.40 \\
$d$=5                            & $Y\geq d$   & 2.57    & -    & 3.00 & 2.39 & 2.15 & 2.06 & 2.02 & 2.01 & 2.00 & 2.00  \\ 
& $E[X_s]$ & 2.00 & 2.00 & 2.00 & 2.00 & 2.00 & 2.00 & 2.00 & 2.00 & 2.00  & 2.00 \\
\hline 
FT    & $Y<d$ & 0.90    & 1.50 & 1.00 & 0.75 & 0.60 & 0.50 & 0.43 & 0.38 & 0.33 & 0.30 \\
$d$=4                            & $Y\geq d$   & 1.98    & -    & 2.33 & 1.84 & 1.64 & 1.56 & 1.52 & 1.51 & 1.50 & 1.50  \\ 
& $E[X_s]$ & 1.50 & 1.50 & 1.50 & 1.50 & 1.50 & 1.50 & 1.50 & 1.50 & 1.50  & 1.50 \\
\hline 
HB & $Y<d$ & 0.30    & 0.50 & 0.33 & 0.25 & 0.20 & 0.17 & 0.14 & 0.13 & 0.11 & 0.10 \\
$d$=2                            & $Y\geq d$   & 0.82    & -    & 1.00 & 0.75 & 0.64 & 0.58 & 0.54 & 0.53 & 0.51 & 0.51 \\ 
& $E[X_s]$ & 0.50 & 0.50 & 0.50 & 0.50 & 0.50 & 0.50 & 0.50 & 0.50 & 0.50  & 0.50 \\
\hline
\end{tabular}
}
\vspace{2mm}
\end{table}


\section{Attack Algorithm and the Faulty Oracle}
\label{sec:algorithms}
In the previous section, we discussed how knowledge of the modulus or no-modulus event leaks information about the secret. In this section, we will construct an attack algorithm that retrieves the secret given access to an oracle that indicates a modulus or no-modulus event. A ``perfect'' oracle, however, is unrealistic in practice where we expect some error in our knowledge of the event. We therefore assume a \emph{faulty} oracle which might erroneously indicate a modulus event. We will analyze the performance of the attack algorithm against varying accuracies of the faulty oracle. 

More precisely we consider a faulty oracle, denoted $\mathcal{O}_{\text{mod}}$, which when given a challenge (and any auxiliary information) as input, returns $-1$ if it guesses that the user has \emph{not} performed the modulus operation (i.e., the no-modulus event is the positive class), and $+1$ otherwise. The oracle can make two types of errors; \textit{type 1 error} is when the oracle outputs $-1$ when it is a modulus event, and a \textit{type 2 error} is when the oracle outputs $+1$ when in fact it is actually a no-modulus event. The true positive rate (TPR) is the probability of correctly guessing the modulus event, and therefore, $1 - \text{TPR}$ is the probability of \textit{type 1 error}. Similarly, the true negative rate (TNR) is the probability of correctly guessing the no-modulus event. Thus, $1 - \text{TNR}$ is the probability of \textit{type 2 error}. The oracle is parameterized by these two probabilities and we denote this by $\mathcal{O}_{\text{mod}}^{\text{TPR}, \text{TNR}}$.

Algorithm \ref{algo:mod_point_update} describes our algorithm to retrieve the secret with access to this faulty oracle, which we call the Modulus Event Points Update algorithm. The algorithm maintains a list of points $(p_1, \ldots, p_n)$ where $p_i$ denotes the points for item $i$. Initially, all items have a score of 0. Upon receiving a challenge, the algorithm consults the faulty oracle. 
If the faulty oracle detects a no-modulus event, then it penalizes all items whose weights are greater than the response $r$ (since such items would require a modulus event to produce the response $r$). Here items with higher weights are given higher penalties (as the expected weight of secret items is lower in the no-modulus event). When the oracle detects a modulus event, items with lower weights are given higher penalties as secret items are expected to have higher weights in a modulus event. This is reflected in the construction of the penalty vector $(v_0, \ldots, v_{d-1})$, where we have $v_0 \le \cdots \le v_{d-1}$. Note that in this case items are penalized irrespective of the response. This is because an item with any weight could have produced the response (since it is a composite of multiple weights reduced modulo $d$.

Since the oracle is faulty, secret items may also get penalized. However, the decoy items are penalized more with an increasing number of challenges, eventually leading to higher scores for the secret items. To show this, we consider a special case of the algorithm and show that the expected score of a secret item is higher than a decoy item with large enough $m$, i.e., number of challenge-response pairs. Specifically, we consider the penalty vector $(v_0, \ldots, v_{d-1})$ to be all zeroes, i.e., no points update in case of the modulus event. Our simulations show that not updating the points at all when a modulus event is detected does indeed take the least number of samples to retrieve the secret.
\begin{theorem}
\label{the:no-mod-point-update}
Let the penalty vector $(v_0, \ldots, v_{d-1})$ be all zeroes. Furthermore, let the other penalty vector, i.e., $(u_0, \ldots, u_{d-1})$  be not identically zero. If $\text{TNR} > 1 - \text{TPR}$, then for sufficiently large $m$, the expected score of a secret item is more than the score of a decoy item.
\end{theorem}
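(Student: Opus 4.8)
The plan is to reduce the claim to a single-round expected-penalty comparison and then exploit an asymmetry between how secret and decoy items are scored. Since the penalty vector $(v_0,\dots,v_{d-1})$ is identically zero, the algorithm only updates points when the oracle outputs $-1$ (a detected no-modulus event), penalizing each present item of weight $w$ by $u_w$ whenever $w > r$. The $m$ rounds are i.i.d., so by linearity $E[p_i] = -m\,\Pr[i \in C]\,E[\text{per-round penalty}\mid i\in C]$, where $\Pr[i\in C]=l/n$ is identical for secret and decoy items. Hence it suffices to show that the per-round expected penalty to a decoy item strictly exceeds that to a secret item; linearity then delivers $E[p_{\text{secret}}] > E[p_{\text{decoy}}]$ for every $m \ge 1$, a fortiori for large $m$.

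The structural heart of the argument is the following observation. In a \emph{true} no-modulus event the response equals the secret sum $Y < d$, and a present secret item's weight is one of the summands of $Y$, so $w_i \le Y = r$; therefore a secret item is \emph{never} penalized during a correctly detected no-modulus event. Consequently a secret item can only accrue penalty via a type-1 error (a true modulus event mislabeled $-1$), which carries the factor $(1-\text{TPR})$. A decoy item's weight, by contrast, is independent of $(r,\text{event})$, so writing $h(r) := \tfrac{1}{d}\sum_{w>r} u_w$ the decoy incurs a \emph{strictly positive} expected penalty $\text{TNR}\cdot A$ from correctly detected no-modulus events, where $A := E[h(r)\,\mathbf{1}[Y<d]] > 0$ (positive because $Y=0$, hence $r=0$ and $h(0)=\tfrac1d\sum_{w>0}u_w>0$, occurs with positive probability). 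I would thus express the two per-round penalties as $(1-\text{TPR})\,P$ for the secret item and $\text{TNR}\,A + (1-\text{TPR})\,B$ for the decoy, with $B := E[h(r)\,\mathbf{1}[Y\ge d]]$ and $P := E[u_{w_i}\mathbf{1}[w_i>r]]$.

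The pivotal step is the identity $P = A + B = E[h(r)]$. I would prove it by conditioning on the sum $Y_{-i}$ of the \emph{other} secret weights present in the challenge: for fixed $Y_{-i}$ the map $w_i \mapsto (Y_{-i}+w_i)\bmod d$ is a bijection of $\mathbb{Z}_d$, so $E_{w_i}[h(r)] = \tfrac1d\sum_{r} h(r)$ is a constant, giving $A+B=E[h(r)]=\tfrac1{d^2}\sum_{w} w\,u_w$; and averaging $E_{w_i}[u_{w_i}\mathbf{1}[w_i>r]]$ over the \emph{uniform} value of $Y_{-i}\bmod d$ yields the same constant. Uniformity of $Y_{-i}\bmod d$ holds exactly whenever at least one other secret item is present, since the convolution of a uniform $\mathbb{Z}_d$ variable with any independent variable is uniform. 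Granting $P=A+B$, the per-round difference collapses to $\text{TNR}\,A + (1-\text{TPR})B - (1-\text{TPR})(A+B) = \big(\text{TNR}-(1-\text{TPR})\big)A$, which is strictly positive exactly under the hypothesis $\text{TNR} > 1-\text{TPR}$ (and consistent with Corollary~\ref{cor:ex-mod}, since the excess weight of secret items inside modulus events is precisely what the $B$-terms cancel).

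The main obstacle is that the identity $P = A+B$ is an \emph{averaged} statement, not a pointwise one: for a fixed $Y_{-i}$ the secret item's expected modulus-event penalty can exceed $A+B$ (e.g. when $Y_{-i}\equiv d-1 \bmod d$), so the cancellation only materializes after integrating over the secret configuration. I would therefore handle two cases separately before averaging. First, the boundary case $g=1$ (exactly one secret item present), where no modulus event is possible, $P=0$, and the difference is the even larger $\text{TNR}\,A>0$, needing only $\text{TNR}>0$. Second, the case $g\ge 2$, where the uniformity argument applies and yields $(\text{TNR}-(1-\text{TPR}))A$. The one remaining delicacy is that the distribution of the number of co-present secret items differs when conditioning on a \emph{secret} versus a \emph{decoy} item lying in the challenge; I would reconcile this by carrying out the comparison configuration-by-configuration (equivalently, comparing a secret and a decoy item within a common challenge, where $r$ and the oracle output are shared) and only then taking expectations, so that each conditional contribution is nonnegative and the no-modulus term keeps it strictly positive. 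Summing these strictly positive per-round differences over the $m$ rounds completes the proof.
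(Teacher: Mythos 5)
Your proposal is correct and rests on the same decisive observation as the paper's proof in Appendix~C: a secret item present in a correctly detected no-modulus event satisfies $\mathrm{wt}(i) \le Y = r$ and so is never penalized there, hence it accrues penalty only through type-1 errors, and $\text{TNR} > 1-\text{TPR}$ then forces the decoy's penalty rate to exceed the secret's. The routes diverge in how this is made quantitative. The paper introduces appearance counts $\eta(i),\eta(j)$, appeals to concentration for large $m$, and applies a single per-round penalty constant $u = \frac{1}{d^2}\sum_i (d-1-i)u_i$ to both items in every case --- implicitly treating the secret item's weight as independent of the response even though that weight is a summand of $Y$. Your identity $P = A + B$, proved by the bijection $w_i \mapsto (Y_{-i}+w_i) \bmod d$ together with the uniformity of $Y_{-i} \bmod d$ when $g \ge 2$, is precisely the justification the paper omits for that step, and your separate handling of $g=1$ (where $P=0$ and the gap is the larger $\text{TNR}\cdot A$) covers the boundary the paper's argument passes over. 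Your per-round linearity argument also yields the strict inequality for every $m \ge 1$ rather than only asymptotically, which is strictly stronger than the stated claim. Two loose ends remain in your sketch: the final ``configuration-by-configuration'' reconciliation of the differing co-presence distributions is announced but not executed (though comparing a secret and a decoy inside a common challenge, where $r$ and the oracle output are shared, does make each conditional contribution nonnegative as you say); and the empty-challenge case $g=0$, where a decoy can still be penalized under a scheme-dependent response while the secret item is absent, should be noted explicitly --- it only adds penalty to the decoy and therefore only helps the inequality.
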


\begin{proof}
See Appendix \ref{proof:no-mod-point-update}.
\end{proof}

\begin{algorithm}[t]
\SetAlgoLined
\DontPrintSemicolon
\KwIn{Scheme parameters $(n, k, l, d)$, number of challenges $m$, penalty vectors $(u_0, \ldots, u_{d-1})$ where $0 = u_0 \ge \cdots \ge u_{d-1}$, and $(v_0, \ldots, v_{d-1})$, where $v_0 \le \cdots \le v_{d-1}$.}
\KwOut{A list of points $( p_1, p_2, \ldots, p_n )$, with top $k$ scores indicating secret items.}
Initialize $( p_1, p_2, \ldots, p_n)$ to all zeroes.\;
\For{$j = 1$ to $m$}{
    Observe challenge $c$ containing items $i$ and weights $\textnormal{wt}(i)$, auxiliary information `aux,' and response $r$.\;
    $b \gets \mathcal{O}_{\text{mod}}^{\text{TPR}, \text{TNR}}(c, \text{aux})$.\;
    \If{$b = -1$ (no-modulus event)}{
        \For{all items $i$ such that $\textnormal{wt}(i) > r$}{
            penalize $p_i \gets p_i + u_{\text{wt}(i)}$.\;
        }
    }
    \Else{
            penalize $p_i \gets p_i + v_{\text{wt}(i)}$.\;
    }
}
\Return{$( p_1, p_2, \ldots, p_n)$}.\;
\caption{{\sc Modulus Event Points Update}}
\label{algo:mod_point_update}
\end{algorithm}

\begin{table*}[t]
\raggedright
\caption{Experimentally derived rounds required to reveal full user secret given varying TPR and TNR of side-channel classifier, for the Modulus applied on BehavioCog (900 Round Benchmark \cite{chauhan2017behaviocog}).}
\label{tab:mod_algo_results_bc}
\vspace{-2mm}
\centering
\resizebox{0.9\textwidth}{!}{%
\centering
\begin{tabular}{|l|l|l|lllllllll|}
\hline
 \multicolumn{3}{|c|}{} & \multicolumn{9}{c|}{Modulus Accuracy (TPR)} \\
 \multicolumn{3}{|c|}{1000 Iterations} & 1.0 & 0.95 & 0.9 & 0.85 & 0.8 & 0.75 & 0.7 & 0.65 & 0.6 \\\cline{1-12}
\multirow{14}{*}{\rotatebox[origin=c]{90}{Non-Modulus Accuracy (TNR)}} & \multirow{14}{*}{\rotatebox[origin=c]{90}{BehavioCog (BC)}} & 1.0 & 165.978 & 262.996 & 345.960 & 451.254 & 565.272 & 678.194 & 823.412 & 982.746 & 1235.992 \\
 &  & 0.95 & 174.384 & 280.342 & 385.178 & 488.884 & 612.682 & 766.068 & 922.540 & 1120.090 & 1342.758 \\
 &  & 0.9 & 181.922 & 294.812 & 416.188 & 524.556 & 676.894 & 846.600 & 1025.982 & 1286.478 & 1580.582 \\
 &  & 0.85 & 193.792 & 320.708 & 459.760 & 594.082 & 776.850 & 971.874 & 1178.938 & 1483.118 & 1891.290 \\
 &  & 0.8 & 207.030 & 355.722 & 508.420 & 675.650 & 867.420 & 1077.646 & 1377.088 & 1761.734 & 2242.302 \\
 &  & 0.75 & 216.218 & 381.012 & 549.340 & 730.038 & 960.940 & 1227.158 & 1629.738 & 2094.498 & 2808.506 \\
 &  & 0.7 & 234.262 & 440.108 & 624.560 & 836.196 & 1115.732 & 1476.358 & 1949.384 & 2522.530 & 3340.990 \\
 &  & 0.65 & 258.334 & 475.660 & 685.860 & 989.504 & 1331.370 & 1754.882 & 2372.312 & 3243.770 & 4297.246 \\
 &  & 0.6 & 279.748 & 522.326 & 803.298 & 1157.238 & 1586.432 & 2203.824 & 2940.524 & 4112.454 & 5987.648 \\ 
  &  & 0.55 & 303.129 & 616.505 & 948.832 & 1348.100 & 1911.934 & 2778.192 & 3909.067 & 5627.870 & 8521.225 \\
 &  & 0.5 & 330.702 & 705.391 & 1126.371 & 1661.984 & 2392.133 & 3593.075 & 5396.283 & 8248.908 & 13557.03 \\
 &  & 0.45 & 369.793 & 829.916 & 1344.850 & 2093.507 & 3149.585 & 5070.656 & 7984.902 & 13339.76 & 25016.93 \\
 &  & 0.4 & 414.546 & 996.672 & 1683.427 & 2815.620 & 4455.89 & 7358.926 & 13297.78 & 25743.05 & 63722.74 \\
 &  & 0.35 & 473.636 & 1234.009 & 2218.427 & 3832.711 & 6882.528 & 12585.91 & 26499.93 & 74077.08 & >200000 \\
 \hline
\end{tabular}%
}
\vspace{3mm}
\end{table*}

\begin{table*}[t]
\begin{minipage}[t]{.74\textwidth}
\raggedright
\caption{Experimentally derived rounds required to reveal full user secret given varying TPR and TNR of side-channel classifier, for the Modulus applied on FoxTail (16,290 Round Benchmark\cite{asghar2015linearization}), and HopperBlum.}
\label{tab:mod_algo_results}
\vspace{-2mm}
\resizebox{\textwidth}{!}{%
\centering
\begin{tabular}{|l|l|l|lllllllll|}
\hline
 \multicolumn{3}{|c|}{} & \multicolumn{9}{c|}{Modulus Accuracy (TPR)} \\
 \multicolumn{3}{|c|}{1000 Iterations} & 1.0 & 0.95 & 0.9 & 0.85 & 0.8 & 0.75 & 0.7 & 0.65 & 0.6 \\\cline{1-12}
\multirow{26}{*}{\rotatebox[origin=c]{90}{Non-Modulus Accuracy (TNR)}} 
 & \multirow{13}{*}{\rotatebox[origin=c]{90}{FoxTail (FT)}} & 1.0 & 234.099 & 361.049 & 462.091 & 590.253 & 716.030 & 851.755 & 996.776 & 1206.307 & 1431.352 \\
 &  & 0.95 & 245.924 & 384.369 & 512.155 & 631.904 & 779.754 & 929.230 & 1132.713 & 1325.878 & 1574.523 \\
 &  & 0.9 & 259.346 & 418.701 & 555.846 & 696.247 & 846.392 & 1039.581 & 1275.197 & 1514.627 & 1826.617 \\
 &  & 0.85 & 275.493 & 450.995 & 591.611 & 751.928 & 945.971 & 1155.384 & 1412.938 & 1694.107 & 2061.644 \\
 &  & 0.8 & 289.746 & 474.639 & 657.124 & 846.583 & 1073.885 & 1341.471 & 1625.042 & 2034.259 & 2407.814 \\
 &  & 0.75 & 309.022 & 518.628 & 719.507 & 942.644 & 1184.645 & 1527.110 & 1873.568 & 2395.379 & 2942.497 \\
 &  & 0.7 & 337.233 & 570.886 & 807.810 & 1076.267 & 1376.793 & 1780.750 & 2223.255 & 2780.624 & 3536.249 \\
 &  & 0.65 & 360.027 & 654.276 & 903.714 & 1222.030 & 1561.489 & 2110.796 & 2593.069 & 3363.440 & 4372.477 \\
 &  & 0.6 & 390.420 & 719.669 & 1027.850 & 1420.807 & 1857.157 & 2491.812 & 3303.234 & 4271.355 & 5663.664 \\ 
 &  & 0.55 & 425.797 & 817.616 & 1211.947 & 1673.944 & 2309.427 & 3092.693	& 4160.624	& 5502.299	& 7751.852 \\
 &  & 0.5 & 466.579 & 928.206 & 1432.573 & 2044.601 & 2804.419 & 3836.378	& 5499.969	& 7662.418	& 11093.98 \\ 
 &  & 0.45 & 519.760 & 1112.707 & 1724.578 & 2495.029 & 3612.860 & 5179.980	& 7484.748	& 11356.39	& 17528.47 \\
 &  & 0.4 & 581.398 & 1302.668 & 2123.358 & 3243.961 & 4803.101 & 7302.969	& 11542.69	& 18748.54	& 33004.62 \\
 & & 0.35 & 665.757	& 1592.206	& 2786.819	& 4429.673	& 7045.669 & 	11521.79	& 19765.98	& 38068.37	& 83833.69 \\ \cline{2-12}
 & \multirow{13}{*}{\rotatebox[origin=c]{90}{HopperBlum (HB)}} & 1.0 & 538.108 & 643.254 & 759.623 & 912.669 & 1056.444 & 1233.634 & 1448.241 & 1702.210 & 2011.441 \\ 
 &  & 0.95 & 574.626 & 684.327 & 809.546 & 972.457 & 1144.787 & 1350.292 & 1616.565 & 1896.063 & 2251.269 \\
 &  & 0.9 & 601.860 & 734.153 & 871.632 & 1041.118 & 1280.526 & 1502.996 & 1810.434 & 2120.654 & 2562.250 \\
 &  & 0.85 & 636.235 & 780.374 & 957.234 & 1137.184 & 1364.196 & 1663.844 & 2056.685 & 2485.285 & 3047.063 \\
 &  & 0.8 & 674.237 & 833.574 & 1031.503 & 1271.620 & 1537.344 & 1922.406 & 2356.007 & 2886.691 & 3512.182 \\
 &  & 0.75 & 712.093 & 908.231 & 1129.152 & 1388.736 & 1757.375 & 2152.616 & 2687.703 & 3351.902 & 4257.354 \\
 &  & 0.7 & 784.420 & 1000.224 & 1230.416 & 1580.273 & 1964.927 & 2466.244 & 3146.388 & 4113.707 & 5272.378 \\
 &  & 0.65 & 852.850 & 1066.540 & 1383.679 & 1789.643 & 2253.984 & 2906.097 & 3840.653 & 4946.336 & 6689.905 \\
 &  & 0.6 & 908.848 & 1181.096 & 1574.578 & 2081.068 & 2654.442 & 3515.377 & 4805.121 & 6448.616 & 8940.049 \\ 
 &  & 0.55 & 983.286 & 1333.104 & 1812.558 & 2402.852 & 3259.621 & 4408.826	& 5955.009	& 8593.798	& 12597.03 \\
 &  & 0.5 & 1087.780 & 1495.170 & 2119.745 & 2964.009 & 4068.224 & 5571.716	& 8035.144	& 12339.58	& 19706.51 \\
 &  & 0.45 & 1225.173 & 1763.995 & 2510.148 & 3626.290 & 5112.787 & 7750.267	& 11859.26	& 19966.72	& 35981.45 \\
 &  & 0.4 & 1356.933 & 2056.374 & 3105.767 & 4595.770 & 7067.243 & 11288.44	& 19362.25	& 37558.86	& 91942.35 \\ 
 & & 0.35 & 1554.770	& 2544.067	& 3914.772	& 6296.296	& 10456.32 & 	18819.72	& 38944.80	& 102742.2	& >200000 \\ \hline
\end{tabular}%
}
\end{minipage}%
\vspace{2mm}
\hfill
\begin{minipage}[t]{.24\textwidth}
\raggedleft
\caption{Point update for BC, FT, HB. A cell is divided into a upper and lower half, representing the detection of a modulus and no-modulus respectively.}
\label{tab:mod_point_update}
\vspace{-1mm}
\begin{subtable}{1\textwidth}
\raggedleft
\resizebox{0.9\textwidth}{!}{%
\begin{tabular}{| c | c | c |c |c |c |}
\multicolumn{6}{c}{a) BC} \\
\hline
\diagbox[innerrightsep=1pt, innerleftsep=1pt]{$r$}{$w$} & 0 & 1 & 2 & 3 & 4 \\ \hline
0 & \ourdiagbox{0}{\hphantom{-}0}& \ourdiagbox{0}{-1}& \ourdiagbox{0}{-1}& \ourdiagbox{0}{-1}& \ourdiagbox{0}{-1}\\ \hline
1 & \ourdiagbox{0}{\hphantom{-}0}& \ourdiagbox{0}{\hphantom{-}0}& \ourdiagbox{0}{-1}& \ourdiagbox{0}{-1}& \ourdiagbox{0}{-1}\\ \hline
2 & \ourdiagbox{0}{\hphantom{-}0}& \ourdiagbox{0}{\hphantom{-}0}& \ourdiagbox{0}{\hphantom{-}0}& \ourdiagbox{0}{-1}& \ourdiagbox{0}{-1}\\ \hline
3 & \ourdiagbox{0}{\hphantom{-}0}& \ourdiagbox{0}{\hphantom{-}0}& \ourdiagbox{0}{\hphantom{-}0}& \ourdiagbox{0}{\hphantom{-}0}& \ourdiagbox{0}{-1}\\ \hline
4 & \ourdiagbox{0}{\hphantom{-}0}& \ourdiagbox{0}{\hphantom{-}0}& \ourdiagbox{0}{\hphantom{-}0}& \ourdiagbox{0}{\hphantom{-}0}& \ourdiagbox{0}{\hphantom{-}0}\\ \hline
\end{tabular}
}
\noindent \\
\vspace{3mm}
\resizebox{0.95\columnwidth}{!}{%
\begin{tabular}{l | c | c | c |c |c |}
\cline{2-6}
& \diagbox[innerrightsep=1pt, innerleftsep=1pt]{$r$}{$w$} & 0 & 1 & 2 & 3\\ \cline{2-6}
b) FT & 0 & \ourdiagbox{0}{\hphantom{-}0}& \ourdiagbox{0}{\hphantom{-}0}& \ourdiagbox{0}{-1}& \ourdiagbox{0}{-1}\\ \cline{2-6}
& 1 & \ourdiagbox{0}{\hphantom{-}0}& \ourdiagbox{0}{\hphantom{-}0}& \ourdiagbox{0}{\hphantom{-}0}& \ourdiagbox{0}{\hphantom{-}0}\\ \cline{2-6}
\end{tabular}
}
\noindent \\
\vspace{3mm}
\resizebox{0.66\columnwidth}{!}{%
\raggedleft
\begin{tabular}{l | c | c | c |}
\cline{2-4}
& \diagbox[innerrightsep=1pt, innerleftsep=1pt]{$r$}{$w$} & 0 & 1\\ \cline{2-4}
c) HB & 0 & \ourdiagbox{0}{\hphantom{-}0}& \ourdiagbox{0}{-1}\\ \cline{2-4}
& 1 & \ourdiagbox{0}{\hphantom{-}0}& \ourdiagbox{0}{\hphantom{-}0}\\ \cline{2-4}
\end{tabular}
}
\end{subtable}%
\end{minipage}
\vspace{2mm}
\end{table*}%

Obtaining an analytical estimate of the number of samples required to retrieve the secret through the algorithm is difficult. We therefore assess this through simulations. The penalty vectors chosen satisfy the condition of the theorem above. In particular, we use the penalty vector $(u_0, u_1, \ldots, u_{d-1}) = (-1, -1, \ldots, -1)$. With this penalty vector, the point update follows the pattern shown in Table~\ref{tab:mod_point_update}, for each of the three schemes. We varied the TPR and TNR of the oracle between 0.6 and 1.0 with steps of 0.05. We maintain the distinction between TPR and TNR to preserve the asymmetrical effects on our algorithm resulting from each type of error. For each pair of TPR and TNR we ran 1,000 simulations of the attack algorithm on each of the three schemes. Instead of giving $m$, i.e., the number of challenges, as an algorithm input, we let it run until the top $k$ items are the secret items. Tables \ref{tab:mod_algo_results_bc} and \ref{tab:mod_algo_results} contains the average number of rounds required for all schemes.

We compare our results from Tables~\ref{tab:mod_algo_results_bc} and Table~\ref{tab:mod_algo_results} to the samples required by best performing efficient algebraic attacks. For BehavioCog (in Table~\ref{tab:mod_algo_results_bc}), the most efficient attack is Gaussian elimination, which finds the secret in 900 rounds~\cite{chauhan2017behaviocog}. For the FoxTail protocol (In Table~\ref{tab:mod_algo_results}), linearization followed by Gaussian elimination requires 16,290 observations~\cite{asghar2015linearization}. In terms of number of sessions, this is 90 sessions for BehavoCog (10 rounds per session) and 815 for FoxTail (20 rounds per session). In comparison, there are various ranges of accuracy levels for the faulty oracle which reduce the average number of sessions required to obtain the secret. Taking a realistic example of (TPR, TNR) = $(1.0, 0.6)$, BehavioCog would only need 279.7 rounds (28 sessions), whilst FoxTail would need 390.4 rounds (20 sessions), a substantially lower number of complete authentication sessions. For the HB protocol (In Table~\ref{tab:mod_algo_results}), no known efficient algebraic attack exists. This, is not surprising as the protocol is based on the NP-Hard problem of learning parity with noise. For HB protocol, we require 909 rounds or approximately 27 sessions with 34 rounds per session (as discussed in Section~\ref{sec:existing_protocols}). Thus, our attack shows an efficient attack based on side channel information.\footnote{We note that Cagalj et al.~\cite{vcagalj2015timing} propose a side channel attack on the original HB protocol (without the window) which recovers the secret in 380 rounds. Unfortunately this attack is not applicable to the windowed variant.} 

In comparison, Tables~\ref{tab:mod_algo_results_bc} and \ref{tab:mod_algo_results} shows that our attack algorithm with many different combinations of TPR-TNR far outperforms the aforementioned attacks. But, which combinations of TPR-TNR are fair and realistic? We see that if the TPR is high (close to 1), then our algorithm is less sensitive to decreasing TNR. This is somewhat evident from Algorithm~\ref{algo:mod_point_update}, and our choice of the penalty vector. A low TNR means that a no-modulus event may be frequently misclassified as a modulus event. However, the algorithm never penalizes items in such a case (due to the use of a zero penalty vector). This is also due to the fact that in a no-modulus event, there is a large weight differential (as shown previously). For binary classification tasks, it is always possible to trade the TNR with diminishing FPR, as TPR and TNR are inversely related. The first few columns in the table correspond to this regime, and we see that in most cases we significantly outperform algebraic attacks.  

On the other hand, in the case of BehavioCog, if both TPR and TNR are less than 0.8 then the performance of the side-channel classifier degrades in comparison to Gaussian elimination. However, in general, the classifier can be trained to favor the TNR over TPR, or vice versa, by varying the threshold. This means that we can fix the threshold to favor a high TPR, say 0.95 or 1.0, sacrificing the TNR as a result but still be able to outperform Gaussian elimination in terms of the number of rounds required to retrieve the secret, as is evident from the table (the first two columns under BC). Indeed, as we shall show in Section~\ref{sec:mod_classifier}, we are able to train the classifier on data from our user study to achieve a TPR of 1.0 and a TNR of 0.38, and still able to obtain the secret after 435 rounds, less than half the number required via Gaussian elimination.

\paragraph*{Increasing Confidence}
The results in Tables~\ref{tab:mod_algo_results_bc} and \ref{tab:mod_algo_results} show the average minimum number of rounds required before the first $k$ items are the user's secret items. While, these numbers can be used as a reference on how many challenge-response pairs are required to find the secret via Algorithm~\ref{algo:mod_point_update} with high probability, the attacker can use a better strategy to increase its confidence on the first $k$ items being the secret items. The idea is to rank items after each round according to their scores, and keeping track of the point differences between the neighbors of the $k$th ranked item. More precisely, let $\text{item}(i)$ denote the item ranked $i$ in the current round, where $1 \le i \le n$ (ties can be broken according to the initial order on the items). Note that the item ranked $i$ might change over successive rounds. The attacker updates the (absolute) difference in points of the following pairs of items: $(\text{item}(k - 1), \text{item}(k))$, $(\text{item}(k), \text{item}(k+1))$, and $(\text{item}(k + 1), \text{item}(k + 2))$. Let us denote these three point differences by $\text{diff}_{k-1, k}$, $\text{diff}_{k, k+1}$ and $\text{diff}_{k+1, k+2}$, respectively. For the first few rounds, the attacker cannot distinguish between these three. For a given (TPR, TNR), once the number of rounds passes the mark given in Tables~\ref{tab:mod_algo_results_bc} and \ref{tab:mod_algo_results}, $\text{diff}_{k, k+1}$ starts deviating away from $\text{diff}_{k-1, k}$ and $\text{diff}_{k+1, k+2}$. The more rounds the attacker observes, the more $\text{diff}_{k, k+1}$ deviates away from the two. Thus, the attacker can increase its confidence that the top $k$ are indeed the secret items by setting a threshold for the gap between $\text{diff}_{k-1, k}$ with respect to $\text{diff}_{k-1, k}$ and $\text{diff}_{k+1, k+2}$. Figure~\ref{fig:increase_conf} shows this for BehavioCog with $(\text{TPR}, \text{TNR}) =  (0.95, 0.95)$. We can see a divergence in the score differences after around 280 rounds, consistent with our simulated rounds required for this configuration (cf. Table~\ref{tab:mod_algo_results_bc}).

\begin{figure}[t]
	\centering
    \includegraphics[width=0.6\columnwidth]{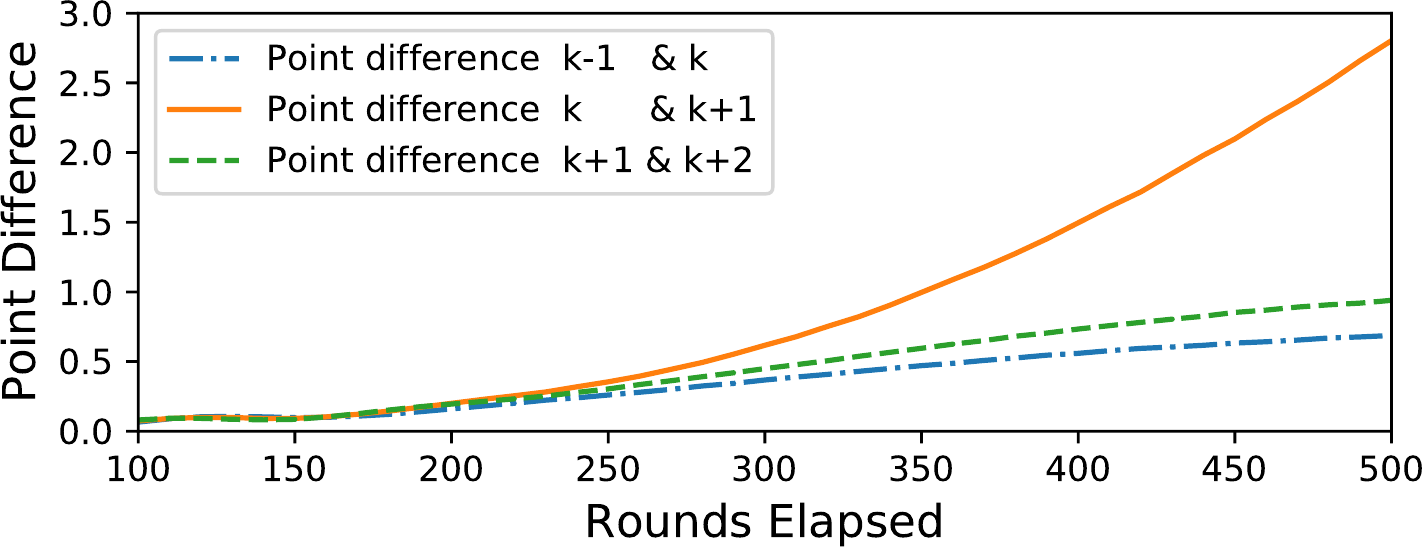}
    \vspace{-2mm}
  	\caption{The point difference between the $k$th and $(k+1)$st ranked items versus the points difference between $(k-1)$st and $k$th ranked, and $(k+1)$st and $(k+2)$nd ranked items as a function of number of observed rounds. These results are for a faulty oracle with 0.95 EER on BehavioCog. Clearly, after around 280 (expected number of rounds to find the secret), there is growing divergence between the scores of secret and decoy items, indicating increased confidence in the top $k$ items being the secret items.
  	}\vspace{2mm}
 	\label{fig:increase_conf}
\end{figure}

\section{Implementing the Attack Using Behavioral Side-Channel}
\label{sec:classifier}
In this section, we show that certain user behavior patterns while processing challenges can provide information about the modulus event. More specifically, we target the user's eye-movement together with the associated timing information. An eavesdropping adversary's ability to accurately guess the modulus event depends on the resolution of behavioral information available. We model this as adversaries with varying strength (Section~\ref{sub:adv-levels}); from the weakest adversary with access to only meta information to the strongest adversary with high-\-resolution eye movement to screen mapping.
We then identify potentially revealing behavior patterns through a user study by collecting data from an eye-tracker (Section~\ref{sec:userdata}). Following this, we identify features corresponding to these behavior patterns which are then used as input to machine learning classifiers (Section~\ref{sec:features}) to predict the modulus event (thus instantiating the faulty oracle of the previous section). The data from the user study is used to train and test the classifiers, and the resulting accuracy levels (TPR and TNR) are used as instances of the faulty oracle in the aforementioned attack algorithm.

\subsection{Levels of Adversarial Strength}
\label{sub:adv-levels}
We define four different levels of adversaries differing by the resolution of behavioral information available to them. 
These levels are outlined below with real-world examples. Figure~\ref{fig:adversary_levels} illustrates them pictorially. We assume each adversary can access the challenge and responses in addition to the behavioral information. 
\begin{itemize}
\item \emph{Level 1 (L1):} An adversary with access to the challenge duration, i.e., time till user submits response. Examples include monitoring Internet traffic or the screen itself.

\item \emph{Level 2 (L2):} An adversary with further access to user dwell times, i.e., when the eye is stationary. This information can be obtained via a hidden camera facing the user, e.g., a pinhole camera mounted on an ATM, or a general surveillance camera. The resulting video feed of the user's eyes can be used to determine still positions through pupil detection and its lack of movement.
    
\item \emph{Level 3 (L3):} An adversary with further access to rudimentary positional information of dwells, e.g., lower half of the screen, top-left quadrant. This information can again be obtained via a hidden camera recording the user's face. Furthermore, we assume that the attacker has access to video-oculography to estimate gaze and to extract positional information from either the geometric model or appearance of the eyes~\cite{krafka2016eye, hansen2010eye, baltruvsaitis2016openface, dalmaijer2014pygaze}.
    
\item \emph{Level 4 (L4):} An adversary with further access to item-specific positional information of dwells. We assume the attacker employs a hidden camera to record a video of the user's face. The attacker has access to highly accurate video-oculography~\cite{krafka2016eye, hansen2010eye, baltruvsaitis2016openface, dalmaijer2014pygaze} to estimate item specific positional information (as compared to coarse-grained positions in L3).  
\end{itemize}

\paragraph*{Note:} A webcam or the front camera of a smartphone or a laptop, are also possible examples of a hidden camera considered for adversary levels 2 to 4. However, it can be argued that the user's device is already compromised if an attacker has access to the in-device camera, and hence the protection provided by an ORAS might be superfluous. Therefore, we discard this as a possible attack vector, and instead consider off-device hidden cameras, examples of which are given above.

\begin{figure*}[t]
\begin{minipage}[t]{.74\textwidth}
\raggedright
    \includegraphics[width=1.0\textwidth]{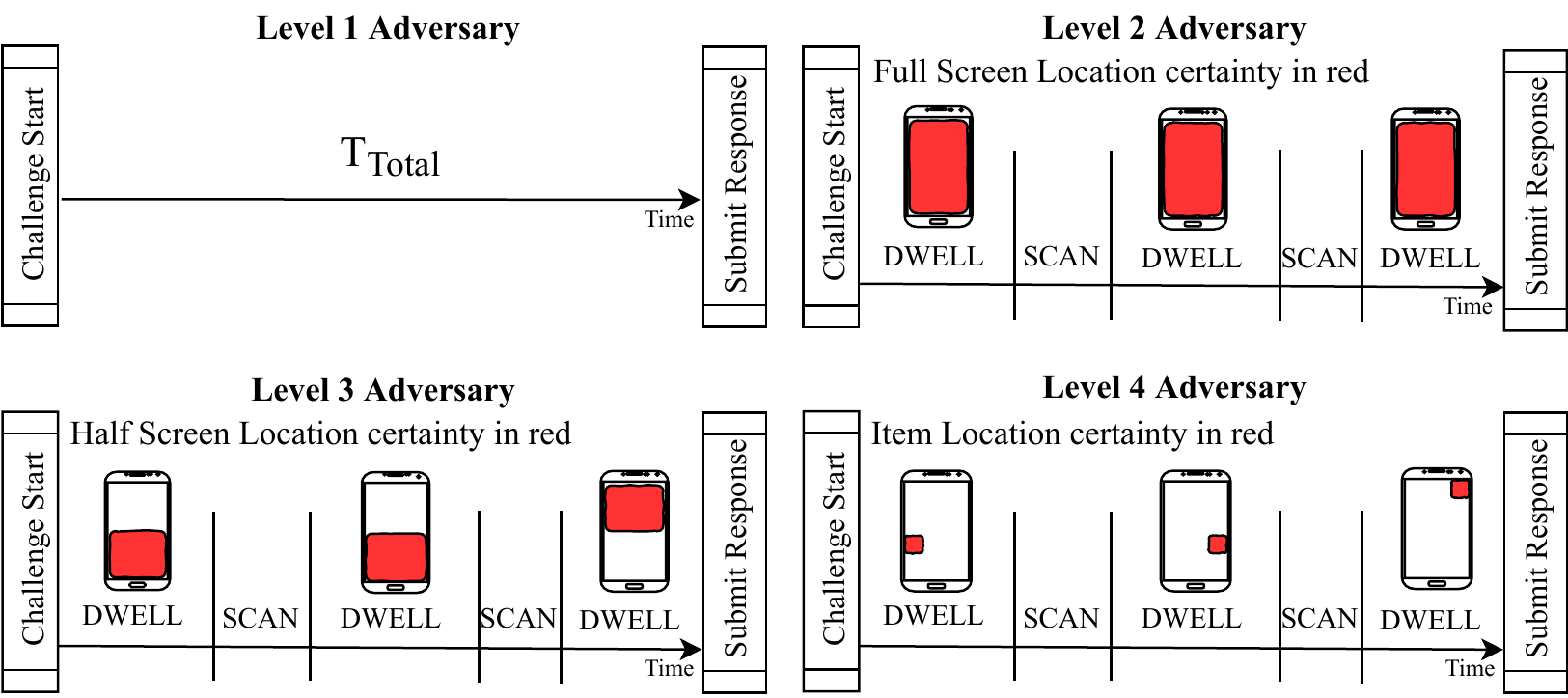}
    \vspace{-2mm}
  	\caption{Four Levels of adversary capabilities in recovering eye-tracking information, Each level beyond L1 is provided with increasingly detailed location information, from no location information (L2), sectors (L3), to specific items (L4).}
    \label{fig:adversary_levels}
\end{minipage}%
\vspace{2mm}
\hfill
\begin{minipage}[t]{.24\textwidth}
\raggedleft
    \includegraphics[width=1.0\columnwidth]{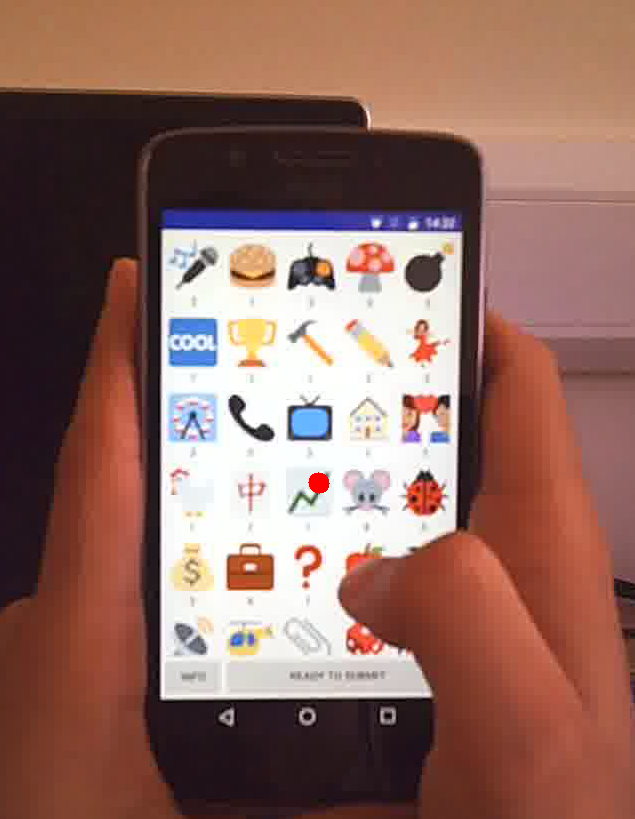}
    \vspace{-2mm}
  	\caption{An image of the $xy$-coordinate (red dot) overlaid on a video feed.}
 	\label{fig:xy_image_mapping}
\end{minipage}
\vspace{2mm}
\end{figure*}%

\subsection{User Study}
\label{sec:userdata}
We recruited 11 postgraduate research students, in the 24-26 age range of mixed gender (6 males, 5 females) as participants in the eye-tracking experiment who were asked to process challenges from the BehavioCog scheme (specifically, the cognitive component in \cite{chauhan2017behaviocog}) with parameters $(n, l, k, d) = (180, 30, 14, 5)$.
The users were given training, and trial attempts to help them remember their secrets and to familiarize themselves with the scheme, followed by computing several random challenges. The position of their gaze, and field of view are recorded with a pair of SMI Eye Tracking Glasses (ETG2).\footnote{\url{https://www.smivision.com/eye-tracking/products/mobile-eye-tracking}}
From all challenges attempted by the users, we sampled those that had correct responses such that there were roughly the same number of instances of $g \in \{0,1,2,3\}$ secret items. This represented 81.4\% of possible challenges within BehavioCog, resulting in a total of 64 challenge samples. A full breakdown of user samples with respect to the number of secrets present $g$, and whether they had performed a modulus operation is shown in Table \ref{tab:user_breakdown}.

\begin{table}[t!]
\centering
\caption{User Contribution of Eye-tracking Samples}
\label{tab:user_breakdown}
\vspace{-2mm}
\resizebox{0.64\columnwidth}{!}{%
\begin{tabular}{| ll | x{1.5mm}x{1.5mm}x{1.5mm}x{1.5mm}x{1.5mm}x{1.5mm}x{1.5mm}x{1.5mm}x{1.5mm}x{3mm}x{3mm} | x{8mm} |}
\hline
64 Samples & User & 1&2&3&4&5&6&7&8&9&10&11&Total \\
\hline
\hline
\multirow{4}{*}{\shortstack[l]{Number\\of\\Secrets}} &  
  0 Secrets & 2 & 1 & 1 & 0 & 1 & 3 & 2 & 1 & 0 & 2 & 1 & 14 \\
& 1 Secrets & 1 & 2 & 2 & 1 & 2 & 1 & 1 & 1 & 0 & 2 & 2 & 15 \\
& 2 Secrets & 2 & 2 & 2 & 2 & 0 & 2 & 2 & 1 & 2 & 1 & 2 & 18 \\
& 3 Secrets & 2 & 1 & 1 & 3 & 0 & 1 & 2 & 0 & 3 & 3 & 1 & 17 \\
\hline \hline
\multirow{3}{*}{\shortstack[l]{Modulus\\Status}} &
   Empty  & 2 & 1 & 1 & 0 & 1 & 3 & 2 & 1 & 0 & 2 & 1 & 14 \\
 & No Mod & 3 & 4 & 4 & 3 & 2 & 3 & 5 & 2 & 3 & 3 & 4 & 36 \\
 & Mod    & 2 & 1 & 1 & 3 & 0 & 1 & 0 & 0 & 2 & 3 & 1 & 14\\
\hline
\end{tabular}
}
\vspace{2mm}
\end{table}

The data from the study included (a) a recorded video of each challenge, (b) and an associated list of $xy$-coordinates. We performed a manual mapping process to link an $xy$-coordinate to a challenge item in the video.
The task was partially automated by overlaying the timestamped $xy$-coordinate as a red dot on the corresponding frame of the video feed, as shown in Figure~\ref{fig:xy_image_mapping}.
The SMI eyetracker software provides its own classification of eye movement as either ``Saccade'' (rapid eye movement or scan), ``Visual Intake'' (low eye movement or dwell), and ``Blink.'' The mapped item over which the user's focus is positioned corresponds to segments of visual intake separated by saccades or blinks. 
We noticed that the SMI software classification is highly sensitive to even the smallest eye movements, whereby if a user shifts their focus on different parts of the same item, a saccade may be registered. An attacker may not have the same luxury of information. So we treat sequential periods of visual intake separated by a saccade or blink of the same item as a continuous dwell on an item. The mapping of positional information to challenge items enables us to produce detailed information available to L3 and L4 adversary. However for less capable adversaries (L1 and L2) such positional information is not required.

\subsubsection{Ethics Consideration}
The participants were recruited via university mailing lists and posters, and were informed about why and how their data was to be used. Their consent for the collection of eye-tracking data was obtained, with monetary compensation provided to the participant at the completion of the experiment. Ethics approval for the conduct of the experiment and analysis of the data was obtained from our ethics review board prior to user recruitment. The recordings of our participant interactions may contain potentially identifiable information (fingerprints, skin tone), thus after the mapping of focal $xy$-coordinates and challenge items, the recordings were encrypted at rest.

\subsection{Features and Classifiers}
\label{sec:features}
From the eye-tracker study, we identified several features, e.g., minimum dwell time, number of vertical transitions (between the two screen halves). Features are categorized according to their availability to the four adversarial levels. Justifications and the hypotheses behind the choice of these features, i.e., how they intuitively reveal information about the modulus/no-modulus event, are included in Appendix~\ref{appendix:features}.
The features are also listed in Table~\ref{tab:order_feat}. Due to the small sample size, providing all features to a machine learning classifier is not recommended (due to the curse of dimensionality~\cite{goodfellow-deep}). Thus features were ranked using the minimal-redundancy-maximal-relevance (mRMR) score~\cite{peng2005feature}. The mRMR algorithm seeks to rank features to maximize the information gain provided by a feature for the task of separating the sample classes. The algorithm also accounts for redundant features; otherwise multiple similar features would be ranked highly, ignoring the fact that each subsequent feature would provide little new information. Table~\ref{tab:order_feat} details the feature rankings in ascending order of information gained.

\begin{table}
    \caption{MRMR Rankings for features of modulus side channel classification.}
    \label{tab:order_feat}
    \vspace{-2mm}
    \centering
\resizebox{0.55\columnwidth}{!}{%
    \begin{tabular}{|l|l|l|rrrr|}\cline{3-7}
    \multicolumn{2}{c|}{} &  \textbf{Adversary Level}& L1 & L2 & L3 & L4 \\ \cline{1-7}
    \multirow{25}{*}{\rotatebox[origin=c]{90}{\textbf{Engineered Features}}} & \multirow{3}{*}{\rotatebox[origin=c]{90}{Level 1}} & Total Time of authentication & 1 & 1 & 1 & 1 \\
    & & Mean Challenge Weight & 2 & 3 & 3 & 3 \\
    & & User Challenge Response & 3 & 5 & 5 & 5 \\ \cline{2-7}
    & \multirow{12}{*}{\rotatebox[origin=c]{90}{Level 2}}& Minimum Dwell Time &  & 6 & 7 & 11 \\ 
    & & 10\textsuperscript{th} Percentile Dwell Time &  & 8 & 10 & 16 \\
    & & Maximum Dwell Time &  & 10 & 14 & 20 \\
    & & 90\textsuperscript{th} Percentile Dwell Time &  & 12 & 16 & 22 \\
    & & Mean Dwell Time &  & 11 & 15 & 21 \\
    & & STD Dwell Time &  & 15 & 19 & 25 \\
    & & Number of Dwells &  & 13 & 17 & 23 \\
    & & Time to end from longest Dwell &  & 14 & 18 & 24 \\
    & & Dwell Consistency &  & 2 & 2 & 2 \\
    & & Duration of First Fixation &  & 7 & 9 & 15 \\
    & & Duration of Last Fixation &  & 9 & 12 & 18 \\
    & & Longest Dwell Consistency &  & 4 & 4 & 4 \\ \cline{2-7}
    & \multirow{4}{*}{\rotatebox[origin=c]{90}{Level 3}}& Vertical Transitions (Even) &  &  & 6 & 10 \\
    & & Vertical Transitions &  &  & 8 & 13 \\
    & & Horizontal Transitions &  &  & 11 & 17 \\
    & & Time to end from screen bottom &  &  & 13 & 19 \\ \cline{2-7}
    & \multirow{6}{*}{\rotatebox[origin=c]{90}{Level 4}}& Number of largest revisits &  &  &  & 8 \\
    & & Number of unvisited items &  &  &  & 14 \\
    & & Longest repeating sequence &  &  &  & 7 \\
    & & Weight of 1\textsuperscript{st} Longest Dwell &  &  &  & 6 \\
    & & Weight of 2\textsuperscript{nd} Longest Dwell &  &  &  & 9 \\
    & & Weight of 3\textsuperscript{rd} Longest Dwell &  &  &  & 12 \\ \hline
    \end{tabular}
}
\vspace{2mm}
\end{table}

A comprehensive selection of classification algorithms was tested from the Python machine learning library \texttt{scikit-learn}~\cite{scikit-learn}. Specifically, we use Support Vector Machines (linear and radial kernel), Naive Bayes, AdaBoost and Random Forest classifiers. Each algorithm was tested on an increasing number of features for each adversary level, as determined by the mRMR algorithm. Each of our classifiers are used in a two-class configuration for the modulus/no-modulus event.

We adopt leave-one-user-out verification as the most rigorous form of model validation, allowing the demonstration of generic behaviors irrespective of user. The method is a proactive assurance against overfitting; with the low number of available training samples, the inclusion of any user specific samples would risk the trained model learning user-specific behavior instead of generic behaviors across the entire group of users. This also represents a realistic attack scenario whereby the attacker has no prior knowledge of the target user.

As previously observed in our simulations (Table~\ref{tab:mod_algo_results}), the performance of the attack algorithms is disproportionately sensitive to the accuracy of detecting one class over the other (modulus event accuracies are more important than the no-modulus accuracy). Each of the classifiers return a prediction probability score for each class label. By default, a threshold is set to 50\%, a sample is classified as belonging to the first class if the score returned by the classifier is 50\% or above. We can favor either class by altering this threshold to tighten or loosen the conditions for being classified into the first class, thus controlling the trade-off between TPR and TNR.

The best performing classifier (algorithm, features, threshold), is then found by using the TPR and TNR values in the faulty oracle in the points update algorithm and simulating challenge-response rounds in the scheme. The simulation is repeated 1,000 times to obtain an average number of rounds. The classifier with the lowest number of rounds to resolve the secret, is chosen. This process is then repeated for all adversarial levels. We note that a single global threshold is used across every testing sample, irrespective of the validation fold. To obtain a single value of the pair (TPR, TNR), we aggregate the test samples from each fold into a set.
We acknowledge the low number of test samples prevents us from directly attacking a user, instead having to adopt challenge-response simulations. Additionally with a larger test group of users, more data can be leveraged to train better performing machine learning models.

\begin{table}[t]
\centering
\caption{Best adversary level classifier exploitation of the modulus operation information.}
\label{tab:mod_classifier_algo_results}
\vspace{-2mm}
\resizebox{0.58\columnwidth}{!}{%
\begin{tabular}{| l | ll | l | l | l |}
\hline
Adver.	&No-Mod  &Mod   &Rounds & Classifier & \# Features, \\ 
Level	&Acc.  &Acc.    &required & used & Threshold\\
\hline
L1   & 0.38	&1.00   & 435.04 &Adaboost & 1, ~0.51\\
L2   & 0.38	&1.00   & 435.04 &Adaboost & 1, ~0.51\\
L3   & 0.38	&1.00   & 435.04 &Adaboost & 1, ~0.51\\
L4   & 0.40	&1.00   & 411.89 &Naive Bayes & 7, ~0.59\\
\hline
\end{tabular}
}
\vspace{2mm}
\end{table}

\subsection{Modulus Event Side Channel}
\label{sec:mod_classifier}

After training and testing classifiers on the ranked features, the results of the simulations with the classifiers as faulty oracles (given by corresponding TPR and TNR) are presented in Table~\ref{tab:mod_classifier_algo_results}. With only one feature, i.e., ``Total Time'', the AdaBoost classifier was able to obtain a (no-mod, mod) accuracy of (0.38, 1.0). Unfortunately, the additional features provided to the classifier in adversarial levels L2 and L3 did not improve our algorithm performance further; Until L4, where the naive Bayes classifier is able to achieve an accuracy of (0.4, 1.0) with 7 features. Note that we did not change the threshold over the default 0.5 by a large degree to obtain a TPR of 1.0. The results for the higher level adversaries do not show a significant improvement over lower level adversaries. But this may be due to our limited field study. Since these accuracy levels are dependent on the data from the user study, a larger user study might reflect better on the influence of other features in classification accuracy. On the other hand, a Level 1 adversary with only the total time of the challenge can sufficiently separate the modulus and the non-modulus challenges, demonstrating the practicality of our attack and the need to consider user behavior when designing ORAS.

With these oracle accuracies, our simulations show that it will take approximately 435 observations on average for a L1-L3 adversary, and 412 for L4 adversary to find the user's secret. This is half of the rounds needed by the Gaussian elimination attack (900 rounds) in the BehavioCog scheme~\cite{chauhan2017behaviocog}. By extending these simulations to the FoxTail and HopperBlum schemes, we observe 589 and 1,346 rounds, respectively, for an L4 adversary, and 618 and 1,415 rounds, respectively, for L1-L3 adversaries. Recall that the linearisation/Gaussian elimination attack on Foxtail requires 16,290 rounds, whereas the HB protocol has no efficient algebraic or statistical attack. 

Finally, comparing the number of rounds in Table~\ref{tab:mod_classifier_algo_results} against the numbers reported in Table~\ref{tab:mod_algo_results}, we see that the number of rounds required by the best classifier via the user study is larger than the simulated attacks. However, we reiterate that this is due to the best accuracy level through our limited user study, which is not indicative of the best accuracy level achievable in practice. With a larger user study we would expect to obtain better accuracy levels, matching those in Table~\ref{tab:mod_algo_results}, e.g., (TPR, TNR) $= (1.0, 0.6)$, and thus retrieving the secret in a smaller number of rounds.

\paragraph{Attack Performance without Timing Information}

It may appear from the lack of improvement in L2 and L3 adversaries' performance that the eye movement related features do not show any gain over simply timing based information. 
This is particularly problematic from an attacker's point-of-view as scheme designers can easily mask timing information by mandating a minimum time before the user can submit a response in each authentication round.
However, the eye movement features are also fairly accurate indicators of the modulus/no-modulus event. To demonstrate this, an experiment with the Total Time feature excluded from the feature set. With only a Naive Bayes classifier, we are able to obtain (TPR, TNR) = $(1.0, 0.38)$ with the second ranked feature (Dwell Consistency) at a threshold of 0.76. This offers performance equivalent to the adversaries L1-L3 in Table~\ref{tab:mod_classifier_algo_results}. This feature is part of the feature set of adversaries L2 to L4, and hence demonstrates that observing eye movement patterns can successfully retrieve the secret.

\paragraph{Per-User Accuracy Rate}
Until now we have reported system-wide accuracies to determine an attacker's performance. Since the dataset is small, we are interested in how the modulus detection accuracy varies between users, to see if the system-wide values are good representatives. We therefore report modulus detection accuracies for each user within our study for the four selected configurations (corresponding to adversary levels noted in Table~\ref{tab:mod_classifier_algo_results}). These are shown in Table~\ref{tab:per-user-acc}.

First, we see that for all users against all adversary levels, we achieve a TPR of 1.0. In case of TNR, against adversary levels L1-L3, 7 out of the 11 users are within $\pm 0.2$ of the system wide TNR of 0.38 (cf. Table~\ref{tab:mod_classifier_algo_results}). Three other users have TNRs between 0.6 and 0.667, slightly off the mark from the system TNR. One user, however, is an outlier with a TNR of 0.0. On the other hand, again, 7 out of 11 users against adversary level 4 are within $\pm 0.2$ of the system TNR of 0.4. However, the outliers in this case are further adrift, with 2 of the users exhibiting a TNR of 0.0, and 2 others showing a TNR of more than 0.714. Note, that higher than average TNR is not a problem from the attack's perspective, as this would require fewer observations before the secret can be retrieved (cf. Tables~\ref{tab:mod_algo_results_bc} and \ref{tab:mod_algo_results}). Thus, we can conclude that the system-wide performance of the attack is mostly representative of its performance per-user: the attack can be carried out against most users in the system, with TNR of most users being close to the system-wide TNR. This indicates that the classifiers are unlikely to have overfit. The exception being the outliers who exhibit a TNR of 0.0. The prevalence of such users in the general population would require a larger study, which we leave as future work. 

\begin{table}[t]
\caption{Modulus detection accuracy separated on a per-user basis. It is observed that in L1-3, the TNR is approximately equal between users. Under L4 however, there appears to be more variance in the performance of the classifier. Where no accuracy is reported for TPR, no positive user samples exist. The total number of user positive and negative samples are noted in the last row of the table.}
\label{tab:per-user-acc}
\vspace{-3mm}
\resizebox{1.0\columnwidth}{!}{%
\begin{tabular}{|l||cc|cc|cc|cc|cc|cc|cc|cc|cc|cc|cc|}
\hline
{User} & \multicolumn{2}{c|}{1} & \multicolumn{2}{c|}{2} & \multicolumn{2}{c|}{3} & \multicolumn{2}{c|}{4} & \multicolumn{2}{c|}{5} & \multicolumn{2}{c|}{6} & \multicolumn{2}{c|}{7} & \multicolumn{2}{c|}{8} & \multicolumn{2}{c|}{9} & \multicolumn{2}{c|}{10} & \multicolumn{2}{c|}{11} \\\cline{1-23}
 Adv. & TNR & TPR & TNR & TPR & TNR & TPR & TNR & TPR & TNR & TPR & TNR & TPR & TNR & TPR & TNR & TPR & TNR & TPR & TNR & TPR & TNR & TPR \\ \hline
L1 & 0.4 & 1 & 0.0 & 1 & 0.6 & 1 & 0.333 & 1 & 0.333 & - & 0.333 & 1 & 0.286 & - & 0.667 & - & 0.667 & 1 & 0.4 & 1 & 0.4 & 1 \\
L2 & 0.4 & 1 & 0.0 & 1 & 0.6 & 1 & 0.333 & 1 & 0.333 & - & 0.333 & 1 & 0.286 & - & 0.667 & - & 0.667 & 1 & 0.4 & 1 & 0.4 & 1 \\
L3 & 0.4 & 1 & 0.0 & 1 & 0.6 & 1 & 0.333 & 1 & 0.333 & - & 0.333 & 1 & 0.286 & - & 0.667 & - & 0.667 & 1 & 0.4 & 1 & 0.4 & 1 \\
L4 & 0.2 & 1 & 0.4 & 1 & 0.2 & 1 & 0.0 & 1 & 0.0 & - & 0.833 & 1 & 0.714 & - & 0.333 & - & 0.333 & 1 & 0.2 & 1 & 0.6 & 1 \\ \hline
Total & 5 & 2 & 5 & 1 & 5 & 1 & 3 & 3 & 3 & 0 & 6 & 1 & 7 & 0 & 3 & 0 & 3 & 2 & 5 & 3 & 5 & 1 \\ \hline
\end{tabular}
}
\vspace{3mm}
\end{table}

\section{Application to Other ORAS}
\label{sec:other_oras}
In this section, we show that the attack is applicable to other ORAS which do not fit the description of $k$-out-of-$n$ ORAS, as long as they contain a modulus operation. We use two such ORAS: PassGrids~\cite{kelleyimpact} and Mod10~\cite{wilfong1999method}, and present slightly modified point update algorithms tailored to these schemes. 
Both PassGrids and Mod10 use a modulus of $d=10$, and due to their fundamentally different construction from $k$-out-of-$n$ ORAS, not all side-channel features previously used are relevant (e.g., no items to gaze at in Mod10). Coarse timing information, however, is still relevant, due to the \emph{problem size effect} as studied by LeFevre, Sadesky and Bisanz.~\cite{lefevre1996selection}. 
The problem size effect observes relatively slower latency (timing) on arithmetic problems with sums greater than 10. For PassGrids and Mod10 with a modular operator of $d=10$, slower latency then is a close indicator of the modulus/no-modulus event. Thus, we may think of the faulty oracles in the attack algorithms on these schemes being initiated by classifiers that use such timing related information to classify modulus/no-modulus events. Throughout this section we will use symmetrical oracle accuracies despite our earlier observation of an asymmetrical response to classifier errors, this is to provide simpler performance references of hypothetical attackers.

These schemes can be configured with secrets of variable length. For example a PIN can be 4 or 6 digits in length. Each secret digit and the challenge cognitive function are independent of the other secret digits. As such, we assume our attacker is capable of obtaining oracle information for each sequential challenge (pass-item/digit), and has knowledge of when a challenge (pin digit entry) starts and stops. 
This notion was not applicable for the previous schemes of BehavioCog, FoxTail and HopperBlum, as the secret items collectively produce a single final response. We remark that while we do have oracle information about the individual digits, we do not stop updating points on any digit until all digits are ranked highest, i.e., the complete secret has been found. 


\subsection{PassGrids}

\begin{algorithm}[b]
\vfill
\SetAlgoLined
\DontPrintSemicolon
%
\KwIn{ Number of challenges $m$; A set of secrets $S$ where $s \in S$ is a tuple $(i, x, y)$, where $i$ is one of 36 locations, $x \in \{1, \ldots, 9\}$ and $y \in \{0, \ldots, 9\}$; size of $S$ as $n = 36 \times 9 \times 10$. }
\KwOut{A list of points $( p_1, p_2, \ldots, p_n )$, with top score indicating the target secret.}
Initialize $( p_1, p_2, \ldots, p_n)$ to all zeroes.\;
\For{$j = 1$ to $m$}{
        Observe challenge $c$, auxiliary information `aux,' and response $r$.\;
        $b \gets \mathcal{O}_{\text{mod}}^{\text{TPR}, \text{TNR}}(c, \text{aux})$.\;
        \If{$f(s,c) \ne r$, for $s \in S$}{
            penalize $s$ by 10\;
        }
        \Else{
            \If{$b = -1$ (no-modulus event) $\And{}$ $(f(s,c) \ge 10)$ \textbf{or} $b = +1$ (modulus event) $\And{}$ $(f(s,c) < 10)$}{
                penalize $s$ by 3\;
            }
    }
}
\Return{$( p_1, p_2, \ldots, p_n )$}.\; 
\caption{{\sc PassGrid Points Update}}
\label{algo:passgrid_update}
\end{algorithm}

The PassGrids system~\cite{kelleyimpact} consists of a series of schemes which are modifications of the commonplace PIN authentication systems. The schemes are designed to be resistant to observation. We consider the version of their scheme called ``PGx+4.'' This scheme is implemented on a $6\times6$ grid with 36 possible locations. A challenge consists of an assignment of a random digit $\{0, \ldots, 9\}$ to each of the 36 locations. The digits are generated so that each appears an approximately equal number of times, i.e., 3-4 times. The user's secret is a set of four tuples of the form: $(i, x_i, y_i)$, where $i$ is a random location, $x_i \in \{1, \ldots, 9\}$ and $y_i \in \{0, \ldots, 9\}$. For each secret tuple $s$, given the challenge $c$, the response is computed as $r_i = f(s, c) = c_ix_i + y_i \bmod 10$, where $c_i$ is the digit corresponding to location $i$ in the challenge. The secret space is thus of size $36\cdot 9 \cdot 10 = 3240$ for a 1-length secret, and consequently a 4-length secret would have $\frac{(3240)!}{(3240 - 4)!} \approx 2^{46.6}$ possible secrets. This scheme offers a degree of observation resilience ($<$10 observations). Once again, we see that the modulus operation is used to provide observation resilience.

Figure \ref{fig:crack_gridpass} displays a CDF on the percentage of 1000 PassGrids that is found after a given number of observations. This demonstrates the modulus information can be used to enhance an attack on this scheme.


\begin{figure}[t]
	\centering
    \includegraphics[width=0.6\columnwidth]{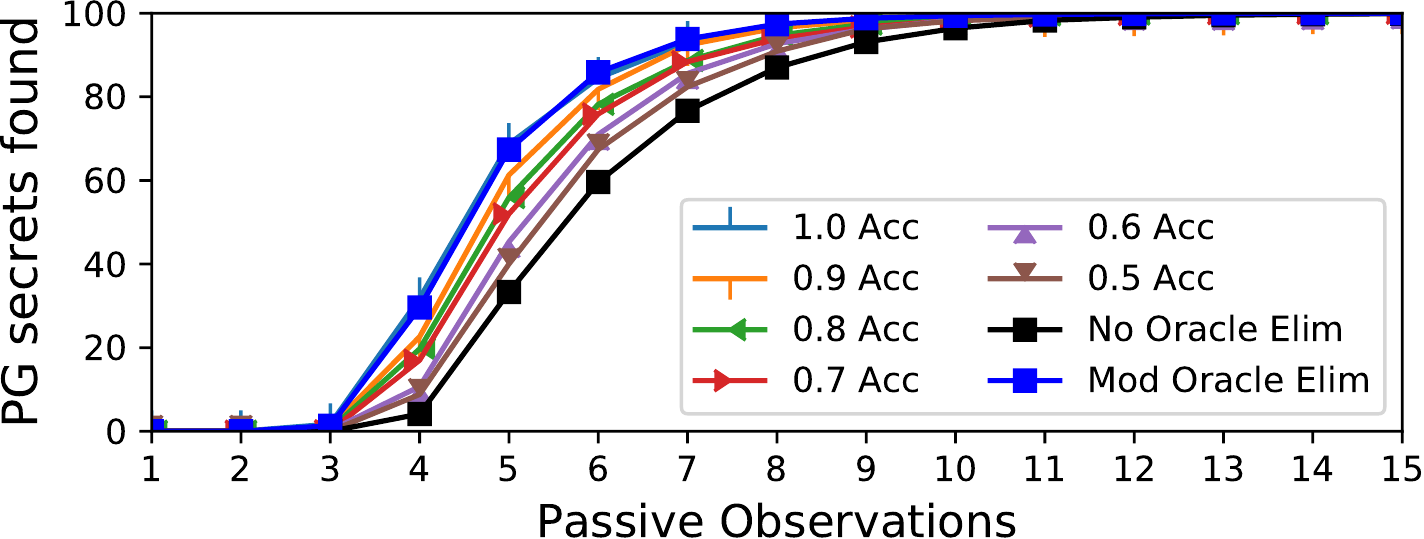}
    \vspace{-2mm}
  	\caption{The CDF of 1000 PassGrid user secrets found over a increasing number of observations attained by an attacker, with varying degrees of modulus information accuracy. We note that the square markers results eliminate possible secrets with perfect oracles, instead of updating points.}
 	\label{fig:crack_gridpass}
    \vspace{2mm}
\end{figure}


\subsection{Mod10}

\begin{algorithm}[b]
\SetAlgoLined
\DontPrintSemicolon

\KwIn{$m$ responses. }
\KwOut{A list of points $( p_0, p_1, \ldots, p_9 )$, with top score indicating the target secret digit.}
Initialize $( p_0, p_1, \ldots, p_9)$ to all zeroes.\;
\For{$j = 1$ to $m$}{
        Observe auxiliary information `aux,' and response $r$.\;
        $b \gets \mathcal{O}_{\text{mod}}^{\text{TPR}, \text{TNR}}(c, \text{aux})$.\;
        \If{$b = -1$ (no-modulus event)}{
            reward $(p_{0}, \ldots, p_r)$
        }
        \Else{
            reward $(p_{r+1}, \ldots, p_9)$
        }
}
\Return{$( p_0, p_1, \ldots, p_9)$}.\; 
\caption{{\sc Mod10 Points Update}}
\label{algo:mod10_update}
\end{algorithm}

\begin{table}[t!]
\caption{Response and Modulus operation (mod performed shaded) of a given secret digit and one time pad.}
\label{tab:mod10_bias}
\vspace{-2mm}
\centering
\resizebox{0.45\columnwidth}{!}{%
\begin{tabular}{| p{0.3cm}p{0.3cm} | llllllllll |}
\hline
 \multicolumn{2}{|c|}{Response}  & \multicolumn{10}{c|}{User Secret Digit} \\
 \multicolumn{2}{|c|}{of Sum}   & 0 & 1 & 2 & 3 & 4 & 5 & 6 & 7 & 8 & 9 \\ \hline
\multirow{10}{*}{\rotatebox[origin=c]{90}{One Time Pad digit}} & 0 & 0 & 1 & 2 & 3 & 4 & 5 & 6 & 7 & 8 & 9 \\
 & 1 & 1 & 2 & 3 & 4 & 5 & 6 & 7 & 8 & 9 & \cellcolor{green}0 \\
 & 2 & 2 & 3 & 4 & 5 & 6 & 7 & 8 & 9 & \cellcolor{green}0 & \cellcolor{green}1 \\
 & 3 & 3 & 4 & 5 & 6 & 7 & 8 & 9 & \cellcolor{green}0 & \cellcolor{green}1 & \cellcolor{green}2 \\
 & 4 & 4 & 5 & 6 & 7 & 8 & 9 & \cellcolor{green}0 & \cellcolor{green}1 & \cellcolor{green}2 & \cellcolor{green}3 \\
 & 5 & 5 & 6 & 7 & 8 & 9 & \cellcolor{green}0 & \cellcolor{green}1 & \cellcolor{green}2 & \cellcolor{green}3 & \cellcolor{green}4 \\
 & 6 & 6 & 7 & 8 & 9 & \cellcolor{green}0 & \cellcolor{green}1 & \cellcolor{green}2 & \cellcolor{green}3 & \cellcolor{green}4 & \cellcolor{green}5 \\
 & 7 & 7 & 8 & 9 & \cellcolor{green}0 & \cellcolor{green}1 & \cellcolor{green}2 & \cellcolor{green}3 & \cellcolor{green}4 & \cellcolor{green}5 & \cellcolor{green}6 \\
 & 8 & 8 & 9 & \cellcolor{green}0 & \cellcolor{green}1 & \cellcolor{green}2 & \cellcolor{green}3 & \cellcolor{green}4 & \cellcolor{green}5 & \cellcolor{green}6 & \cellcolor{green}7 \\
 & 9 & 9 & \cellcolor{green}0 & \cellcolor{green}1 & \cellcolor{green}2 & \cellcolor{green}3 & \cellcolor{green}4 & \cellcolor{green}5 & \cellcolor{green}6 & \cellcolor{green}7 & \cellcolor{green}8 \\ \hline
\end{tabular}
}
\vspace{2mm}
\end{table}

\begin{figure}[t!]
	\centering
    \includegraphics[width=0.6\columnwidth]{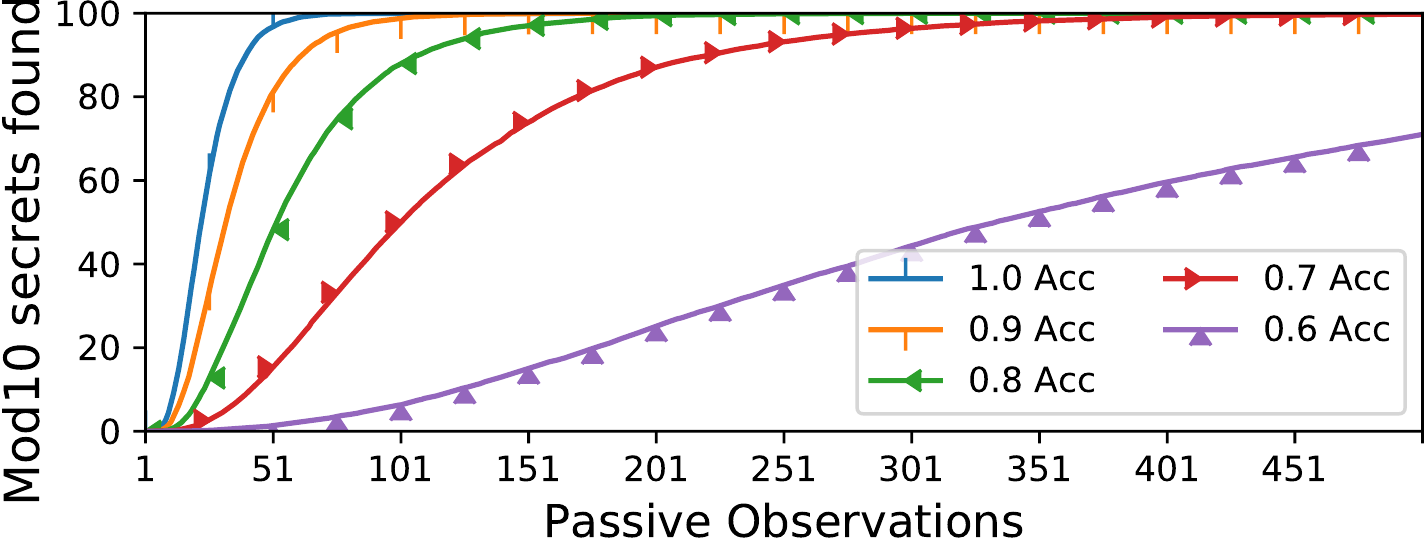}
    \vspace{-2mm}
  	\caption{The CDF of Mod10 user secrets found over a increasing number of observations attained by an attacker, with varying modulus oracle accuracy.}
 	\label{fig:crack_mod10}
 	\vspace{2mm}
\end{figure}

The Mod10 method~\cite{wilfong1999method, vcagalj2015timing} is a patented method proposed as an alternative to commonplace PIN authentication. The scheme combines each of the digits in the user's PIN with a one time pad (OTP) communicated through a protected channel. More specifically, for the $i$th PIN digit $s_i \in \{0, \ldots, 9\}$, the verifier covertly communicates an OTP $o_i \in \{0, \ldots, 9\}$. The user responds with $r_i = s_i + o_i \bmod 10$. It follows that each of the digits is equally likely to be the secret even after observing the response $r_i$.



We consider the effect of the faulty modulus oracle in obtaining the user's PIN. The knowledge of a modulus event divides the response space, as seen in Table \ref{tab:mod10_bias}. This knowledge reduces the number of possible secret pin digits which could be combined with an unknown random OTP to produce the given response. The user's 4-digit secret PIN can be found by updating points to reward secret digits that agree with the response, and the mod-oracle, for each of the 4 secret digits. The algorithm for one PIN digit is shown in Algorithm \ref{algo:mod10_update}, which can be extended in a modular way to all 4 digits. In the case of a no-modulus event, all secret digits less than or equal to the response $r$ are rewarded (since, digits greater than $r$ would require a modulus operation regardless of the OTP).
More precisely, since it is a no-modulus event, we necessarily have $s_i + o_i = r_i$ (even without reducing the result modulo 10). If $s_i > r_i$, then this implies $o_i < 0$, a contradiction. Hence, $s_i \le r_i$. Thus, we reward the points $p_0$ to $p_{r}$ in the algorithm. On the other hand, in case of the modulus event, all secret digits greater than the response $r_i$ are rewarded. This follows from the fact that in a modulus event, we necessarily have $o_i + s_i = r_i + 10$. Since, $o_i \le 9$, this gives us $r_i + 10 \le 9 + s_i$, and hence $s_i \ge r_i + 1$. Therefore, all points $p_{r+1}$ to $p_{9}$ are rewarded in the algorithm.
By simulating users on the Mod10 scheme, and using symmetrical oracle accuracies (same TPR and TNR for no-modulus/modulus events) we can find the PIN in
(Mod Accuracy, Average Rounds): (1.0, 24.4), (0.9, 36.1), (0.8, 60.20), (0.7, 118.37), (0.6, 409.76). This result is also visually displayed in Figure~\ref{fig:crack_mod10}. 

We note that Cagalj et al.~\cite{vcagalj2015timing} demonstrate timing attacks on the same scheme, exploiting the differences in the user's cognitive load in the addition of the one time pad to their secret digit. They are able to reduce the entropy of the unknown pin digit by 0.5 bits over an observation, and effectively reducing the candidate size of the pin digit from 10 to 6 (with 90\% confidence) over 90 observations. However, their attack does not retrieve the entire secret.





\section{Discussion}
\label{sec:discussion}
For $k$-out-of-$n$ ORAS, one way to reduce the efficacy of the proposed side-channel attack is to increase the expected number of secrets present in the challenge as shown in Section~\ref{sec:weight_bias}. Unfortunately, this comes at a substantial cost to usability, since the only way to increase the number of secrets (without introducing statistical vulnerabilities~\cite{yan2012limitations}) is to either increase the number of secret items $k$ or the window size $l$. Both increase the cognitive load on the user by requiring more secrets to recognize and more computations to perform. Decreasing the number of total items $n$ is not desirable either as it reduces the password space making it susceptible to brute-force attack.

As some of our features are directly related to timing information (e.g., total time of authentication), enforcing a minimum time before response submission, as suggested previously~\cite{vcagalj2015timing}, may increase the difficulty of detecting the modulus event. This mitigation technique is also applicable to the PassGrids and Mod10 schemes (which are not $k$-out-of-$n$ ORAS). However, this is ineffective against some of the other features used in our attack to detect the module event, e.g., dwell consistency. Thus, eye movement patterns can still reveal side channel information even with this mitigation technique. 

Since we only had a low number of user samples available, we restricted ourselves to simple classifiers to produce a fair model. We acknowledge that additional user samples would allow more sophisticated classifiers (e.g. neural networks) to be trained, producing an improved side-channel classifier, and thus faster compromise of the user secret. Additionally a more representative sample of the population may yield more diverse results in both timing and eye-tracking based features, as a majority of our participants were young, research students highly capable of doing basic mental mathematics.

While our attacks do not deem the ORAS considered in this paper completely insecure, they show that the security of these schemes is greatly reduced under side-channel attacks in terms of number of rounds a secret can be used before renewal. We have argued that a scheme's claim to being observation resilient should be evaluated against side-channels attacks as well. In the case of BehavioCog and FoxTail, we have less than halved the ``safe'' number of rounds for these schemes. On other schemes, such as the HB protocol and Mod10 our attacks have shown a stronger result. The former does not have an efficient algebraic/statistical attack, and the latter by definition is immune to any algebraic attacks (being an OTP-based scheme).

For a number of reasons, we did not pursue more intuitive attacks based on eye movement patterns, such as following the user's gaze and directly labeling items with higher dwell times as possible secret items. The main difficulty is in simulating a given accuracy level of an oracle which predicts an item being a secret or a decoy item. To simulate a given oracle accuracy level, we would need to first determine how it translates to the empty challenge event, i.e., when there are no secret items present in a challenge. Moreover, if it is not an empty challenge event, we need to determine how the oracle accuracy relates to the number of secret items present in the challenge, the dwell times for each of the secret/decoy items in the challenge, and the fact that the user sometimes does not dwell on the secret items at all (we found through our user study that some users would never dwell over any particular item, secret or decoy, when computing the response). Thus, while we could assume an oracle which predicts each of the $l$ items present in the challenge as being a secret/decoy item with a given accuracy level, such an oracle would be relying on a lot more assumptions which need to be justified. In comparison, the modulus event is a single binary event tied to the entire challenge. A larger user study would indicate whether such direct eye-gaze attacks are viable or not, by relying on empirical data rather than simulated oracles. Due to our limited user study, we were not able to do so.

Lastly, we remark that another advantage of the modulus-based attack over direct eye-gaze attacks, is that it delineates the attack algorithm from the actual side-channel being used. Eye movement patterns may not be the only source of side channel information. It could be possible that third party trackers on a device with access to any one of the many device sensors, may utilize this data and establish additional side-channel to expose the user's secret. This is an interesting avenue for future work.

\section{Related Work}
\label{sec:related}
We focus on related work on side-channel attacks on ORAS as well as password and PIN authentication schemes. The most related work to ours is the timing attack from Cagalj et al.~\cite{vcagalj2015timing} who exploit coarse-grained timing information as a side channel. Coarse-grained means that the timing information is limited to the overall time taken to respond to a challenge. They exploit the fact that the users time to respond to a challenge is directly proportional to the cognitive load (which varies due to randomized weights in the challenge). They demonstrate the susceptibility of the (full) HB protocol and the Mod10 scheme to their timing attack.
In contrast, our work exploits further information (features other than total time taken) obtainable via observing user's eye movement patterns coupled with the observation that a modulus event indicates a high cognitive-load challenge. As a result, our attack is applicable to a broader class of $k$-out-of-$n$ ORAS (as well as other ORAS that use a modulus operation). Note that the simple timing attack from~\cite{vcagalj2015timing} does not apply to the windowed HB protocol considered in this paper.

To the best of our knowledge, this is the only work that explores side-channel attacks on ORAS. However, there are numerous studies on side-channel attacks on PIN and password-based schemes, which we summarize next. 

Kune and Kim present a side-channel attack which extracts the user's PIN by observing the time taken as the finger travels between PIN digits on the keypad \cite{FooKune2010}. This timing information enables the attacker to derive the distance travelled, and thus infer potential key pairs the user was moving between. With the key pairs the attacker reduces the possible space of secrets, to eventually find the user's secret.
The attack from~\cite{cai2011touchlogger} uses the position of the phone during PIN entry to determine the location of the secret digit using gyroscope information. The smudge attack \cite{aviv2010smudge} is able to infer a user's pass-pattern from the oily residue remaining on the screen from the user's finger when in contact with the screen. We note that the challenge-response pairs in the authentication systems considered in this paper are already assumed to be known to the attacker, and as a result these side channel attacks are not applicable to our case. 

There has also been some work on using hidden and/or on-device cameras to steal user's PIN entry. The work in \cite{yue2014blind} shows how an attacker can use computer vision to determine the exact digit or keyboard letter pressed through a distant camera even if the angle is not optimal (directly facing the screen). Likewise, \cite{simon2013pin} shows a similar attacker capability who has access to a front facing camera feed of the user, to identify which digit was pressed on an on-screen number pad. They observe that the user (in one-handed operation) may tilt the phone, and consequently the camera to press a digit in a particular location. Thus if the location can be derived from the position of a stationary reference (e.g. the user's face) on the camera feed, so can the secret digit. Our work relates to these two works in terms of using a camera recording to detect eye movement patterns; however, as discussed before, the task of retrieving the secret in our case is more involved (as opposed to mere detection of password letters entered).


\section{Conclusion}
We have investigated and successfully exploited the modulus event present in existing observation resilient schemes. We have shown that proposed schemes are vulnerable to eye-movement based side-channel information which indicates the occurrence of the mod events. In our works we have presented the algorithms to exploit the weight bias in the modulus event, with an attempt at leveraging timing and positional focus found in a user's unconscious behavior in solving the authentication challenges. With the algorithm independent to the side-channel, we speculate there may exist other behavioral features that can be measured and utilized to better improve the overall attack. 
The development of algorithms to exploit cognitive schemes that involve the modulus like PassGrids or Mod10 demonstrate the value of this leaked information.
In this work, through analysis, we are able to derive why these side-channels leak information about the secret, present remedies to reduce the amount of information released in observation-resilient authentication schemes, and serves to inform future scheme designers.

\section{Acknowledgments}
This research was funded by the Optus Macquarie University Cybersecurity Hub, Data61 CSIRO, an Australian Government Research Training Program (RTP) Scholarship and the COMMANDO-HUMANS project (EPSRC Grant EP/N020111/1). We would like to thank the anonymous reviewers for their feedback to improve the paper.

\bibliographystyle{ACM-Reference-Format}
\bibliography{ref.bib}
\appendix

%
\section{Proof of Theorem \ref{theorem:prob_mod_1}}
\label{proof:prob_mod_1}
\begin{proof}
\begin{align}
\Pr(Y<d \mid g) &= \sum_{y=0}^{d-1} \Pr(Y=y \mid g), \nonumber \\
&= \frac{1}{d^g} \sum_{y=0}^{d-1} \sum_{s=0}^{\lfloor \sfrac{y}{d} \rfloor} (-1)^s \binom{g}{s} \binom{y - sd + g -1}{g-1}, \nonumber
\end{align}
Since $0 \le y < d$, we have $\lfloor \sfrac{y}{d} \rfloor = 0$. Thus, $s = 0$, and we get
\begin{align}
\Pr(Y<d \mid g) &= \frac{1}{d^g} \sum_{y=0}^{d-1} \binom{y + g -1}{g-1}, \nonumber \\
&= \frac{1}{d^g} \binom{0 + g -1}{g-1} + \frac{1}{d^g} \binom{1 + g -1}{g-1} + \cdots \nonumber\\ 
    &+ \frac{1}{d^g} \binom{d - 1 + g -1}{g-1} \nonumber \\
%
&= \frac{1}{0! d^g} + \frac{g}{1! d^g} + \frac{g(g+1)}{2! d^g} + \cdots \nonumber\\
&+ \frac{g(g+1)\cdots(g+d+1)}{(d-1)!d^g}
\nonumber
\end{align}
%
As $g \rightarrow \infty$, we see that each polynomial numerator is $o(d^g)$. Thus, $\Pr(Y < d \mid g) \to 0$. Or equivalently, $\Pr(Y\ge d \mid g) \to 1$.
\end{proof}

\thispagestyle{empty}
\section{Proof of Lemma~\ref{lem:d-ineq}}
\label{sec:lem1_proof}
\begin{proof}
First let $g = 2$.
The proof is by induction on $d \ge 1$. First let $d = 1$. Then since $p(0) > 0$,
\begin{align*}
\frac{1}{2}\sum_{i=0}^1 i p(i) &= \frac{1}{2}\cdot 0 \cdot p(0) + \frac{1}{2}\cdot 1 \cdot p(1) \\
                              &< \frac{1}{2}\cdot 1 \cdot p(0) + \frac{1}{2}\cdot 1 \cdot p(1). \\
                              &= \frac{1}{2}\sum_{i=0}^1 p(i).
\end{align*}
Thus, the statement is true for $d = 1$. Now assume the statement holds for $d = r$, then
\begin{align*}
\frac{1}{2}\sum_{i=0}^{r + 1} i p(i) &=  \frac{1}{2}\sum_{i=0}^{r} i p(i) + \frac{r + 1}{2}p(r+1)\\
                                     &< \frac{r}{2}\sum_{i=0}^{r} p(i) + \frac{r + 1}{2}p(r+1) \\
                                     &< \frac{r+1}{2}\sum_{i=0}^{r+1} p(i), \\
\end{align*}
which completes the proof for $g = 2$. For $g > 2$, observe that
\begin{align*}
\frac{1}{g} \sum_{i=0}^{d} i p(i) < \frac{1}{2} \sum_{i=0}^{d} i p(i),
\end{align*}
and hence the lemma is true for all $g \ge 2$.
\end{proof}

\section{Proof of Theorem \ref{the:no-mod-point-update}}
\label{proof:no-mod-point-update}
\begin{proof}
Let $i \in [n]$ be a secret item and let $j \in [n], j \neq i$ be a decoy item. Let $\eta(i)$ and $\eta(j)$ denote the number of times the two items appear in $m$ challenges. Let $\eta^+(i)$ and $\eta^-(i)$ denote the number of times the secret item $i$ appears in the modulus and no-modulus events, respectively. First, for sufficiently large $m$, we see that both $\eta(i)$ and $\eta(j)$ approach their expected value, and therefore
\begin{equation*}
 \eta(i) \approx \eta(j) = \eta^+(j) + \eta^-(j). 
\end{equation*}
Next, note that due to step 6 in the algorithm, the secret item never gets penalized in case the oracle correctly identifies the no-modulus event (the secret item if present cannot have weight more than the response $r$). Therefore, we are looking at the instances where the oracle wrongly labels a modulus challenge as a no-modulus challenge. The probability of a particular response in this case is $1/d$. Since the secret item's weight is randomly generated, the probability that its weight is greater than $r = i$ is given by $(d - 1 - i)/d$. Therefore, the expected points update is given by
\[
\frac{1}{d^2} \sum_{i = 0}^{d - 1} (d - 1 - i)u_i.
\]
Denote the above by $u$. Then, the expected score of a secret item $i$ in $m$ challenges is given by

\begin{align*}
    (1 - \text{TPR}) \cdot \eta^+(i) \cdot u & \le  (1 - \text{TPR}) \cdot \eta(i) \cdot u \\
    &\approx (1 - \text{TPR}) \cdot \eta(j) \cdot u \\
    &= (1 - \text{TPR}) \cdot \eta^-(j) \cdot u \\
    &+ (1 - \text{TPR}) \cdot \eta^+(j) \cdot u \\
    &< \text{TNR} \cdot \eta^-(j) \cdot u \\
    &+ (1 - \text{TPR}) \cdot \eta^+(j) \cdot u,
\end{align*}

which is the expected score of the decoy item $j$ in $m$ challenges.
\end{proof}


\section{Feature Intuition}
\label{appendix:features}

Recall that the Dwell is period of user visual intake of a specific item, characterized by lowered rapid eye movement.

\subsection{Adversary Level 1 Feature Hypotheses}\label{tab:l1}
\begin{enumerate}[label=\alph*)]
\item \textbf{Total Time:} A challenge requiring a modulus operation involves more mental operations (size-effect-problem \cite{lefevre1996selection}), and should require more time.

\item \textbf{Mean Challenge Weight:} The expected value of individual weights is uniform, however collectively challenge may have a bias in the item weights. E.g. there are more higher weights, potentially providing information about the modulus event.

\item \textbf{Challenge Response:} As previously noted, there exists a small bias in the probability of a modulus occurring dependent on the final submitted response. This may be useful for informing the classifier.
\end{enumerate}

\subsection{Adversary Level 2 Feature Hypotheses}
\label{tab:l2}
\begin{enumerate}[label=\alph*)]
\item \textbf{Min Dwell Time:} The shortest time spent viewing an image can be indicative of the user's confidence that a secret image has been located. This value should be shorter when secret items are present. Alternatively this value may be short for when a user retrieves weights from a low number of secret for mental computation (e.g. 1 secret requires no computation).
However, a user quickly scanning in the challenge may also exhibit a short min time, which can be managed by considering the 20\textsuperscript{th} percentile.

\item \textbf{Max Dwell Time:} The longest time spent viewing an image may be indicative of the time that a user spends stationary to compute the challenge result. A more difficult arithmetic problem should incite a larger cognitive load and hence require more time. Like min dwell time, the longest dwell may reflect instances of user distraction for an extended period of time, hence the consideration of the 80\textsuperscript{th} percentile of dwell times.

\item \textbf{Mean Dwell Time:} If there are more secret images, with more math, the user may spend more time processing the challenge (Feature 1.a). But, the verification time of each image may be shorter as they skim over the challenge once again to retrieve item weights for mental computation.

\item \textbf{STD Dwell Time:} It is observed that users are more likely to double check the challenge if a low number secrets are present. Spending more time on specific uncertain images, should result in a larger time deviation.

\item \textbf{Number of Dwells:} The number of dwell positions should be indicative of the extent of scans and checks for secrets in the challenge. A challenge with more secrets may prompt additional checks, producing more dwells.

\item \textbf{Time from longest stationary till end:} A challenge with a secret present, should have the user stop and (mentally) compute a result. After the pause, they will submit their response. This is an attempt to isolate the period of time in which the user should be computing their response, and indicative of the problem difficulty.

\item \textbf{Dwell Consistency:} By contrasting the high and low extremes (min/max or 20th/80th percentile), we can obtain a normalized ratio of their differences. Thus, any outlying images such as a secret the user spends additional time on, will be captured by this normalized difference.

\item \textbf{Duration of First Fixation:} If a user locates a secret image initially when the challenge is presented, they will remain fixated on their secret image for a longer duration of time as compared to decoys \cite{elazary2010bayesian}.

\item \textbf{Duration of Last Fixation:} When a user recovers their secret items from the challenge, their last fixation would also include computation time for the modulus-sum obtain the result. The length of this last fixation is a possible indicator of the difficulty of the computational task, with the inclusion of the modulus hypothesized to take longer.

\item \textbf{Longest Dwell Consistency:} As an extension of the previous point, consistently taking a long time traversing multiple items may be indicative of a difficult task like that of the modulus.  
\end{enumerate}

\subsection{Adversary Level 3 Feature Hypotheses}
\label{tab:l3}
\begin{enumerate}[label=\alph*)]
\item \textbf{Number of transitions (Halves)} A user scanning through a challenge is likely to traverse the entire challenge, consequently crossing between different areas of the challenge. It is suspected to be larger for challenges with more secrets present. Transitions include: Left-Right (Ignoring center due to odd \# of columns), Top-Bottom and (Even), ignoring two center rows.

\item \textbf{Time from bottom of the screen to the end:} After a user finds their secrets, they press a button to proceed to a submission page, a user may take additional time to (mentally) compute the response prior to proceeding in a modulus event with many secrets.
\end{enumerate}

\subsection{Adversary Level 4 Feature Hypotheses}
\label{tab:l4}
\begin{enumerate}[label=\alph*)]
\item \textbf{Highest Number of Reentries:} For a given secret image within the challenge, it is likely the user's first pass will view the image to simply recognize it. However, upon completion of a visual search, the user may revisit the image to get the weight for response computation. Potentially, leading to larger values when many secrets are present.

\item \textbf{Number of non-entries:} For a given challenge, a user may quickly re-identify their secrets from a rapid search (no dwell), the secrets form salient images. As such, some images may not be viewed at all, thus producing more un-viewed images when less secret images are present.

\item \textbf{Length of longest repeating sequence:} During the visual search, a user is may backtrack on the items identified as secrets, either from uncertainty, or a revisit to retrieve weights for response computation. Thus a longer repeating sequence could be related with a larger number of secret items in a challenge, and thus provide modulus event information.

\item \textbf{Weight of Longest dwell item (Top 3):} As previously mentioned a user spends more time on secret items. Therefore larger weights on these dwelled items will likely require a modulus operation. The weights of the top 3 largest dwelled items are considered.
\end{enumerate}

\end{document}